\pdfoutput=1
\documentclass[%
 reprint,
superscriptaddress,
 amsmath,amssymb,
 aps,
 pra,
]{revtex4-2}

\usepackage{caption}
\usepackage{subcaption}
\usepackage{graphicx}
\usepackage{amsmath}
\usepackage{geometry}
\usepackage{hyperref}
\usepackage[T1]{fontenc}
\usepackage[utf8]{inputenc}
\usepackage{float}
\usepackage{csquotes}
\usepackage[version=4]{mhchem}
\usepackage{amssymb}
\usepackage{xcolor}
\usepackage{bbold}
\usepackage{dcolumn}
\usepackage{bm}
\usepackage{algorithm}
\usepackage{algpseudocode}

\usepackage{amsthm}
\newtheorem{theorem}{Theorem}
\newtheorem{lemma}{Lemma}

\newcommand{\beginsupplement}{%
        \setcounter{table}{0}
        \renewcommand{\thetable}{S\arabic{table}}%
        \setcounter{figure}{0}
        \renewcommand{\thefigure}{S\arabic{figure}}%
        \setcounter{equation}{0}
        \renewcommand{\theequation}{S\arabic{equation}}%
        \setcounter{section}{0}
        \renewcommand{\thesection}{S\Roman{section}}%
     }

\begin{document}
\preprint{}

\title{
Exact Realization of Deep ReLU Computation in a Microscopic Statistical-Mechanical System
}

\author{Junxu Li}
\email{lijunxu1996@gmail.com}
\affiliation{Department of Physics, College of Science, Northeastern University, Shenyang 110819, China}
\date{\today}

\begin{abstract}
Understanding the physical origin of the piecewise-linear computation in deep rectified linear unit (ReLU) networks remains a fundamental challenge.
Here we establish an exact statistical-mechanical realization of arbitrary ReLU networks. 
Starting from a microscopic configuration space and its state multiplicities, we construct a partition function without prescribing a neural-network activation function.
The resulting system admits equivalent descriptions in terms of cascaded quantum operations with post selection, or fermionic transport model.
We rigorously prove that in $\beta\to+\infty$ limit, the thermodynamic observables of this system exactly reproduce the hidden states, outputs, and loss function of an arbitrary deep ReLU network.
Crucially, we demonstrate that the piecewise-linear behavior of the ReLU network emerges from a first-order phase transition in the microscopic system, where the discontinuities precisely coincide with the boundaries of linear regions of ReLU network.
This work establishes an exact physical foundation for deep neural computation and provides a novel statistical-mechanical perspective on neural architecture design.
\end{abstract}

\maketitle

\section{Introduction}
Artificial neural networks have become a central paradigm for computation and learning\cite{lecun2015deep, schmidhuber2015deep, goodfellow2016deep}, yet the microscopic principles underlying their remarkably effective computational rules remain poorly understood\cite{sejnowski2020unreasonable, saxe2021if}. 
In particular, the rectified linear unit (ReLU) has a deceptively simple form\cite{nair2010rectified, glorot2011deep},
\begin{equation}
\operatorname{ReLU}(u)=\max(0,u),
\end{equation}
but its repeated application through deep networks generates a highly structured partition of parameter and input space into distinct linear regions\cite{pascanu2013number,montufar2014number}.
This piecewise-linear structure is central to the representation and optimization of modern deep networks, but its physical origin remains unclear\cite{ montufar2014number, serra2018bounding}.
A natural question is therefore whether ReLU computation can arise as the emergent behavior of a microscopic physical system governed by standard statistical-mechanical principles\cite{laurent2018multilinear, he2020piecewise}.

Statistical mechanics has long provided a useful language for neural networks, linking learning dynamics\cite{hopfield1982neural, amit1985storing, watkin1993statistical, bahri2020statistical}, generalization, collective behavior, and phase transitions to thermodynamic descriptions \cite{baity2018comparing,geiger2019jamming,geiger2020scaling,ghio2024sampling}.
Much of this literature starts from a prescribed neural-network architecture
 nd uses statistical-mechanical methods to characterize its collective or learning behavior \cite{li2021statistical,pacelli2023statistical}, including mean-field
theories of neural-network activations \cite{peterson1987mean, mei2019mean, sirignano2020mean} and studies of phase structure in finite- and infinite-width networks \cite{oostwal2021hidden,luo2021phase}.
In the complementary direction, recent studies have shown that physical principles can be used to construct computational systems with neural network-like behavior and, in some cases, to generate effective neural nonlinearities \cite{yoshioka2019transforming, wright2022deep, abbasi2024physical}.
These developments nevertheless leave open an important question, can a
statistical-mechanical system specified independently at the microscopic
level realize an arbitrary deep ReLU computation exactly, including its
layerwise hidden states, output, and loss?

Here we answer this question affirmatively. 
We construct a statistical-mechanical system from its microscopic configuration space and state multiplicities and derive the partition function, without imposing any ReLU activation rule. 
The same construction has two complementary physical realizations, cascaded quantum operations with post selection\cite{aaronson2005quantum, aliferis2007accuracy} and a fermionic transport model.
These interpretations provide a microscopic picture in which trajectories propagate through successive layers, and lead to a common set of thermodynamic observables whose behavior can be analyzed for arbitrary network architectures.

We rigorously prove an exact correspondence between this physical system and deep ReLU networks in the zero-temperature limit $\beta\to+\infty$.
Specifically, we show that the thermodynamic observables of the system, the population imbalance, its output, and the associated loss, exactly reproduce the hidden states, outputs, and loss function of a given ReLU network.
These identities hold for arbitrary inputs, network parameters, depths, and layer widths. 
Furthermore, we demonstrate that the piecewise-linear structure of deep ReLU networks\cite{laurent2018multilinear, he2020relu, tao2022piecewise} emerges from a first-order phase transition in the microscopic system, where the exhibits discontinuities coincide exactly with the boundaries of the linear regions.
Our results provide a microscopic physical foundation for ReLU networks.
Beyond offering a new interpretation of neural computation, this correspondence opens the possibility of studying neural-network structure using the tools of statistical mechanics and of designing computational architectures directly from physical principles.

\section{Implementation of affine transformations}
\label{Sec1}

\begin{figure}[t]
    \begin{center}
        \includegraphics[width=0.48\textwidth]{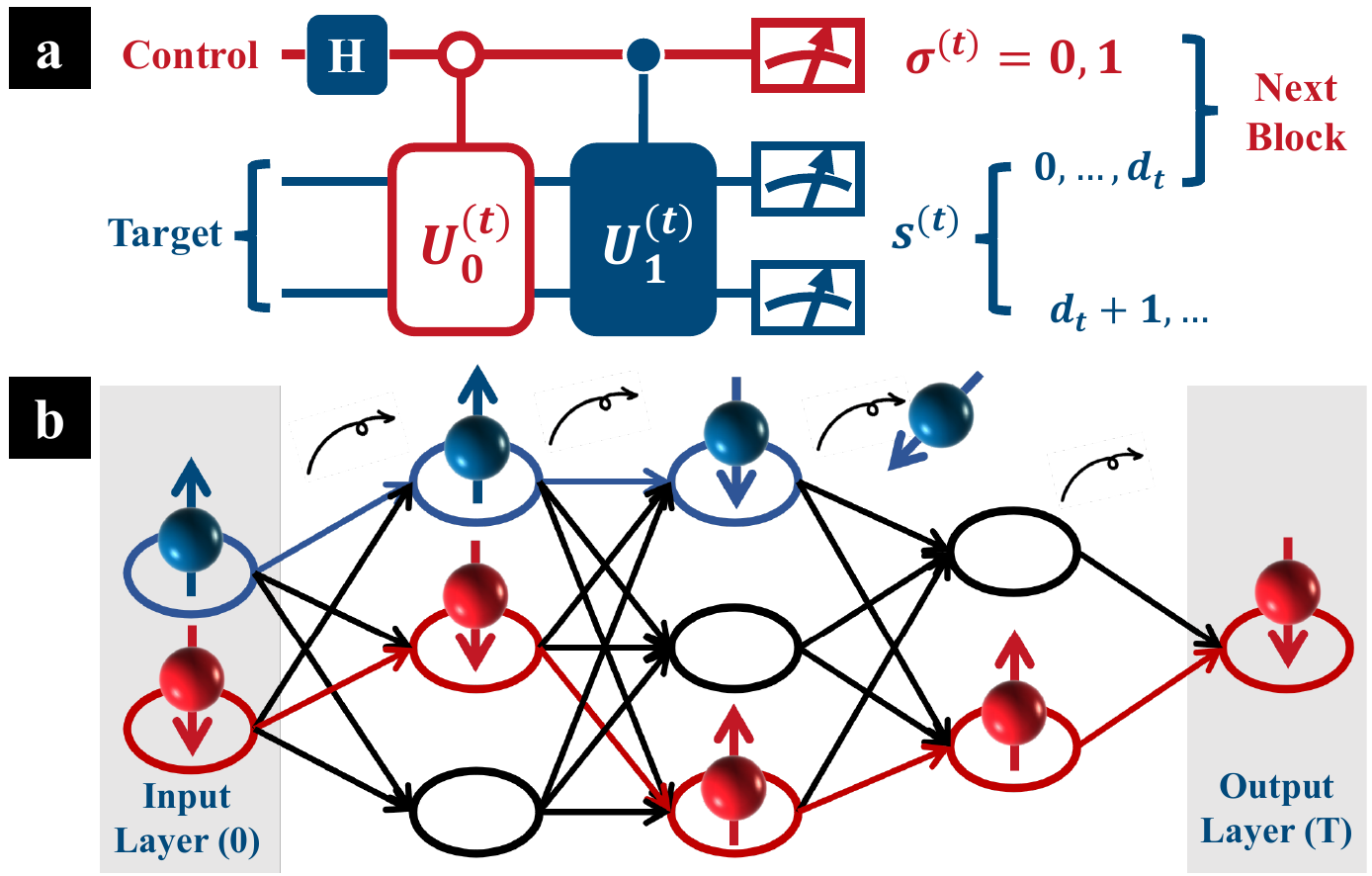}
    \end{center}
    \caption{
{\bf Physical realization of an affine transformation and its fermionic transport interpretation.}
(a) A quantum circuit block implementing the affine map $\widetilde{W}^{(t)}$. 
After measurement, the control and target outcomes are recorded as $\sigma^{(t)}\in\{0,1\}$ and $s^{(t)}$, respectively.
Shots with $s^{(t)}\in\{1,\ldots,d_t\}$ are passed to the next block, whereas shots with $s^{(t)}>d_t$ are discarded.
(b) Representation as a fermionic transport model.
The $t$th column contains $d_t$ effective sites, and each spin-$1/2$ fermion represents one shot.
The fermion propagates from one column to the next, eg, from site $s^{t-1}$ in the $t-1$-th column to the $s^{t}$ site in the next, where spin-up and spin-down
states correspond to $\sigma^{(t)}=0$ and $1$.
Fermions that successfully reaches the $T$-th layer correspond to all $s^{(t)}\in\{1,\ldots,d_t\}$(red), whereas escaped Fermions correspond to certain $s^{(t)}>d_t$(blue).
}
    \label{fig1}
\end{figure}

Consider an arbitrary $T$-layer ReLU network with input $\hat{x}$, the hidden state of the $t$-th layer is given as
\begin{equation}
\mathbf{h}^{(t)}
=
\operatorname{ReLU}\!\left(
W^{(t)}\mathbf{h}^{(t-1)}+\mathbf{b}^{(t)}
\right),
\quad t=1,\ldots,T,
\end{equation}
where $W^{(t)}\in\mathbb{R}^{D_t\times D_{t-1}}$ are weight matrices and
$\mathbf{b}^{(t)}\in\mathbb{R}^{D_t}$ are bias. 
For simplicity, $\mathbf{h}^{(0)}=\hat{\mathbf{x}}$ is the original input, whereas $\mathbf{h}^{(T)}$ is the output.
The affine transformation $W^{(t)}\mathbf{h}^{(t-1)}+\mathbf{b}^{(t)}$ can be converted as a linear transformation $\widetilde W^{(t)}\widetilde{\mathbf h}^{(t-1)}$, where we define
\begin{equation}
\widetilde{\mathbf{h}}^{(t)}
=
\begin{pmatrix}
\mathbf{h}^{(t)}\\
1
\end{pmatrix},
\qquad
\widetilde W^{(t)}
=
\begin{pmatrix}
W^{(t)} & \mathbf{b}^{(t)}\\
\mathbf{0}^{\mathsf T} & 1
\end{pmatrix}
\end{equation}
We have $\widetilde W^{(t)}\in\mathbb{R}^{d_t\times d_{t-1}}$, $\widetilde{\mathbf h}^{(t)}\in\mathbb{R}^{d_t}$, where $d_t=D_t+1$.
Especially, $\widetilde{\mathbf h}^{(0)}=(\hat{\mathbf{x}}^T, 1)^T$.

The real matrix $\widetilde W^{(t)}$ can be embedded into a block with controlled operations as depicted in Fig.(\ref{fig1}a).
The physical implementation of $\widetilde W^{(t)}$, in brief, is as follows.
First, split $\widetilde W^{(t)}$ as
$\widetilde W^{(t)}=\widetilde W^{(t)}_+-\widetilde W^{(t)}_-$,
ensuring that each component of $\widetilde W^{(t)}_{\pm}$ is non-negative.
Next, rescale $\widetilde W^{(t)}_{\pm}$ by $C^{(t)}\geq\max\left\{\left\|\widetilde{W}^{(t)}_{\pm}\right\|_2\right\}$, yielding $\widetilde W^{(t)}_{\pm} = \frac12 {C^{(t)}}\omega^{(t)}_{\pm}$.
$\widetilde\omega^{(t)}_{\pm}$ are both contractions, and thus can be embedded into a unitary operator by dilation.
By appropriately setting $C^{(t)}$ and $U_{0,1}$, we have
\begin{equation}
    \begin{split}
    \widetilde W^{(t)}_{kj}
    &=
    \frac12 {C^{(t)}}
    \left(
        |\langle k|U^{(t)}_0|j\rangle|^2-|\langle k|U^{(t)}_1|j\rangle|^2
    \right)
    \\
    k&=1,\ldots,d_t,\qquad j=1,\ldots,d_{t-1}
    \end{split}
    \label{eq_wtkj}
\end{equation}
For simplicity, we denote $\widetilde \omega^{(t)}=\left(\widetilde \omega^{(t)}_{+} - \widetilde \omega^{(t)}_{-}\right)/2$, where $\left(\widetilde\omega^{(t)}_{\pm}\right)_{kj}=|\langle k|U_{0,1}|j\rangle|^2$.

The input shots of the $t$-th block are all prepared at the computational basis, with control qubit at $\sigma^{(t-1)}=0,1$, and target qubit at state $s^{(t-1)}=1,\ldots, d_{t-1}$.
After applying the operations as depicted in Fig.(\ref{fig1}a), all qubits are measured.
Record the readout of control qubit as $\sigma^{(t)}=0,1$, and target qubit as $s^{(t)}=1,\ldots, d_{t},d_{t}+1,\ldots$.
The shots with $s^{(t)}=1,\ldots, d_{t}$ are sent to the next block as input, as they correspond to the transformation of $\widetilde W^{(t)}$, whereas the other shots are discard.
Denote the number of shots after applying the $t$-th block, with readout $\sigma^{(t)}=0,1$, $s^{(t)}=k$ as $N^{(t)}_{0,k}$ and $N^{(t)}_{1,k}$, respectively, and denote the total number of shots as $N$, initially.
We denote population imbalance ${\mathbf z}^{(t)}\in\mathbb{R}^{d_t}$, with the $k$-th component as
\begin{equation}
    z^{(t)}_k=\frac{N^{(t)}_{0,k}-N^{(t)}_{1,k}}{N}, \qquad k=1,\ldots,d_t
\end{equation}
$z^{(k)}_k$ describes the difference of shots with same target qubit readout $k$, but with different control qubit readouts.
Especially, we denote $\mathbf{x}=\widetilde{\mathbf h}^{(0)}/C^{(0)}$ as the rescaled inputs, where $C^{(0)}\geq \sum_{j=1}^{d_0}|h^{(0)}_j|$.
For simplicity, we denote ${\mathbf z}^{(0)}={\mathbf x}$.
The $t$-th block therefore realizes linear transformation
\begin{equation}
{\mathbf z}^{(t)}
=
\widetilde \omega^{(t)}
{\mathbf z}^{(t-1)}
\end{equation}
After measurement, the state is diagonal in the computational basis and has the same form required as the input of the next block.
Thus the cascade of $t$ blocks implements
\begin{equation}
    \begin{split}
    {\mathbf z}^{(t)}
    &=\widetilde \omega^{(t)}
    \widetilde \omega^{(t-1)}
    \cdots
    \widetilde \omega^{(1)}
    {\mathbf z}^{(0)}
    \\&=
    \left(
        \prod_{\tau=1}^{t}C^{(\tau)}
    \right)^{-1}
    \widetilde W^{(t)}
    \widetilde W^{(t-1)}
    \cdots
    \widetilde W^{(1)}
    {\mathbf z}^{(0)}.
    \end{split}
    \label{eq_cascade}
\end{equation}
\section{Fermionic transport model interpretation}
The cascade of the proposed quantum circuit blocks can be interpreted as a directed, layered fermionic transport network.
Consider a two-dimensional network consisting of $T$ columns as shown in Fig.(\ref{fig1}b), where the $t$-th column contains $d_t$ sites and corresponds to the target-register states $s^{(t)}\in\{1,\ldots,d_t\}$ of the $t$-th circuit block.
The columns are ordered from left to right and represent successive stages of the quantum circuit.
Initially, an ensemble of fermions is distributed over the sites in the leftmost column.
Each spin-1/2 fermion represents one measurement shot, and its initial position and spin are determined by the corresponding input state.
At each step, a fermion propagates from the $t$-th column to the $(t+1)$-th column according to the transition amplitudes generated by the corresponding quantum circuit block.
Fermions may either propagate to a site in the next column (corresponding to $s^{(t)}=1,\ldots,d_t$, which are sent to the next block as input) or escape from the network (corresponding to $s^{(t)}=d_t+1,\ldots$, which are discard).

For the fermions that remain in the network, we record their complete trajectories
\begin{equation}
    \boldsymbol{\sigma}
    =
    \left(
    \sigma^{(0)},\ldots,\sigma^{(T)}
    \right),
    \qquad
    \mathbf{s}
    =
    \left(
    s^{(0)},\ldots,s^{(T)}
    \right),
\end{equation}
where $s^{(t)}\in\{1,\ldots,d_t\}$ records the readout of target qubits, and $\sigma^{(t)}\in\{0,1\}$ specifies the state of the control qubit.
$\sigma^{(t)}$ is identified with the two spin states, $\uparrow\downarrow$, as $(-1)^{\sigma^{(t)}}=\pm1$.
These complete trajectories are `effective', as they correspond to the full affine transformations in the original ReLU network.
Recalling Eq.(\ref{eq_wtkj},\ref{eq_cascade}), the probability of observing a complete trajectory ${\sigma},\mathbf{s}$ is denoted by
\begin{equation}
    \begin{split}
    p(\boldsymbol{\sigma},\mathbf{s})
    =
    \prod_{\tau=1}^T
    \frac12 |\langle s^{(\tau)}|U_{\sigma^{(\tau)}}|s^{(\tau-1)}\rangle|^2
    \end{split}
\end{equation}
Notice that $p(\boldsymbol{\sigma},\mathbf{s})$ is probability of the effective trajectory itself, the probability of choosing certain initial state $\sigma^{(0)},s^{(0)}$ is not included.

\section{Partition function of the proposed model}
\label{Partition function of the proposed model}
We then introduce post-selection after the measurements of each block.
Focusing on the effective trajectories, we derive the partition function for all possible post-selection criteria.

For simplicity, we define indicators $\alpha^{(t)}_{s^{(t)}}\in\{0,1\}$.
After the $t$-th block, shots with readout $\sigma^{(t)}\in\{0,1\},s^{(t)}\in\{1,\ldots,d_t\}$ will be sent to the next block as inputs if $\alpha^{(t)}_{s^{(t)}}=1$, otherwise, if $\alpha^{(t)}_{s^{(t)}}=0$, these shots will be blocked.
In the fermionic transport model interpretation, $\alpha^{(t)}_{s^{(t)}}\in\{0,1\}$ describes if the site $s^{(t)}$ in the $t$-th layer is open (1) or closed (0).
If a site is open, all shots corresponding to that measurement outcome are retained and passed to the next circuit block as input.
If a site is closed, the corresponding shots are blocked at this site, and do not propagate further.
For simplicity, we collect the post-selection criteria (or open/closed configuration) of all sites as a single vector
$\boldsymbol{\alpha}
    =
    \left(
    \alpha^{(1)}_1,\ldots,\alpha^{(1)}_{d_1},
    \ldots,
    \alpha^{(T)}_1,\ldots,\alpha^{(T)}_{d_T}
    \right)$
with which, the population imbalance in Eq.(\ref{eq_cascade}) is rewritten as
\begin{equation}
    \mathbf{\widetilde z}^{(t)}(\boldsymbol{\alpha})
    = \hat{\alpha}^{(t)}\widetilde \omega^{(t)}
    \hat{\alpha}^{(t-1)}\widetilde \omega^{(t-1)}
    \cdots
    \hat{\alpha}^{(1)}\widetilde \omega^{(1)}
    {\mathbf z}^{(0)}
    \label{eq_ztalpha}
\end{equation}
where $\hat{\alpha}^{(t)}=diag\{\alpha^{(t)}_1, \ldots, \alpha^{(t)}_{d_t}\}$.

Consider an ensemble of all possible post-selection criteria (or open/closed configuration) $\boldsymbol{\alpha}$.
Let $q(\boldsymbol{\alpha})$ denote the probability of these configurations. 
Assuming that the inputs and the circuit blocks $\{W^{(t)}\}_{t=1}^T$ are fixed. The Shannon entropy due to uncertainty of $\boldsymbol{\alpha}$ is given by
\begin{equation}
    S = -\sum_{\boldsymbol{\alpha}} q(\boldsymbol{\alpha}) \ln q(\boldsymbol{\alpha}).
\end{equation}
Hereafter, we focus on the effective shots, $\sigma^{(t)}\in\{0,1\},s^{(t)}\in\{1,\ldots,d_t\}$, which corresponds to affine transformations of the original ReLU network, and fermions that always stay in the network.
However, the overall population imbalance $\mathbf{\widetilde z}^{(t)}$ as given in Eq.(\ref{eq_ztalpha}) contains trivial shots, with possible $s^{(\tau)}>d_\tau$ in the successive $T-t$ blocks $\tau=t+1,\ldots,T$.
To address this issue, we define the effective population imbalance ${\pi}^{(t)}$ of all sites $s^{(t)}=1,\ldots,d_t$ of the $t$-th layer,
\begin{equation}
    \pi^{(t)}(\boldsymbol{\alpha}) = 
    \sum_{\boldsymbol{\sigma},\mathbf{s}}
    p(\boldsymbol{\sigma},\mathbf{s})x_{s_0}
    \prod_{\tau=1}^{t}(-1)^{\sigma^{(\tau)}}
    \alpha^{(\tau)}_{s^{(\tau)}}
\end{equation}
where the sum runs over all effective trajectories that pass by $s^{(t)}=k$.
In fermionic transport model, $\pi^{(t)}$ describes the average spin of sites $s^{(t)}=1,\ldots,d_t$ in $t$-th layer over all effective trajectories.

At equilibrium, the entropy $S$ is maximized subject to the constraints $\delta \left(q(\boldsymbol{\alpha})\sum_{t=1}^T2^{T-t}\pi^{(t)}(\boldsymbol{\alpha})\right) = 0$.
The factor $2^{T-t}$ accounts for the degeneracy arising from the fact that we only track the control states $\sigma$ for the first $t$ blocks, while the control states for the remaining $T-t$ blocks remain unknown (each contributing a factor of 2).
By applying Lagrange multipliers, we consider the functional
\begin{equation}
    \delta S + \delta \left(q(\boldsymbol{\alpha})\sum_{t=1}^T2^{T-t}\pi^{(t)}(\boldsymbol{\alpha})\right) = 0,
\end{equation}
where $\beta$ are the associated multipliers.
Substituting the expression for $S$, and noticing that the variation $\delta q(\boldsymbol{\alpha})$ is arbitrary, we find the equilibrium distribution
\begin{widetext}
    \begin{equation}
     q(\boldsymbol{\alpha}) 
     = \frac{1}{Z}\exp \left(\beta\sum_{t=1}^T  2^{T-t}\pi^{(t)}(\boldsymbol{\alpha})\right)
     = \frac{1}{Z}\exp \left(\beta\sum_{t=1}^T  2^{T-t} \sum_{\boldsymbol{\sigma}, \mathbf{s}} p(\boldsymbol{\sigma}, \mathbf{s}) x_{s_0} \prod_{\tau=1}^t (-1)^{\sigma^{(\tau)}} \alpha^{(\tau)}_{s^{(\tau)}} \right)
     \label{eq_qalpha}
\end{equation}
\end{widetext}
with corresponding partition function $Z$ as
\begin{equation}
    Z = \sum_{\boldsymbol{\alpha}} \exp \left(\beta\sum_{t=1}^T  2^{T-t}\pi^{(t)}(\boldsymbol{\alpha})\right).
    \label{eq_partition}
\end{equation}
For simplicity, we define $\Pi(\boldsymbol{\alpha})=\sum_{t=1}^T2^{T-t}\pi^{(t)}(\boldsymbol{\alpha})$, which serves like an effective potential.
Further details on the derivation of $Z$ are provided in the Supplemental Material.

\begin{figure*}[t]
    \begin{center}
        \includegraphics[width=0.9\textwidth]{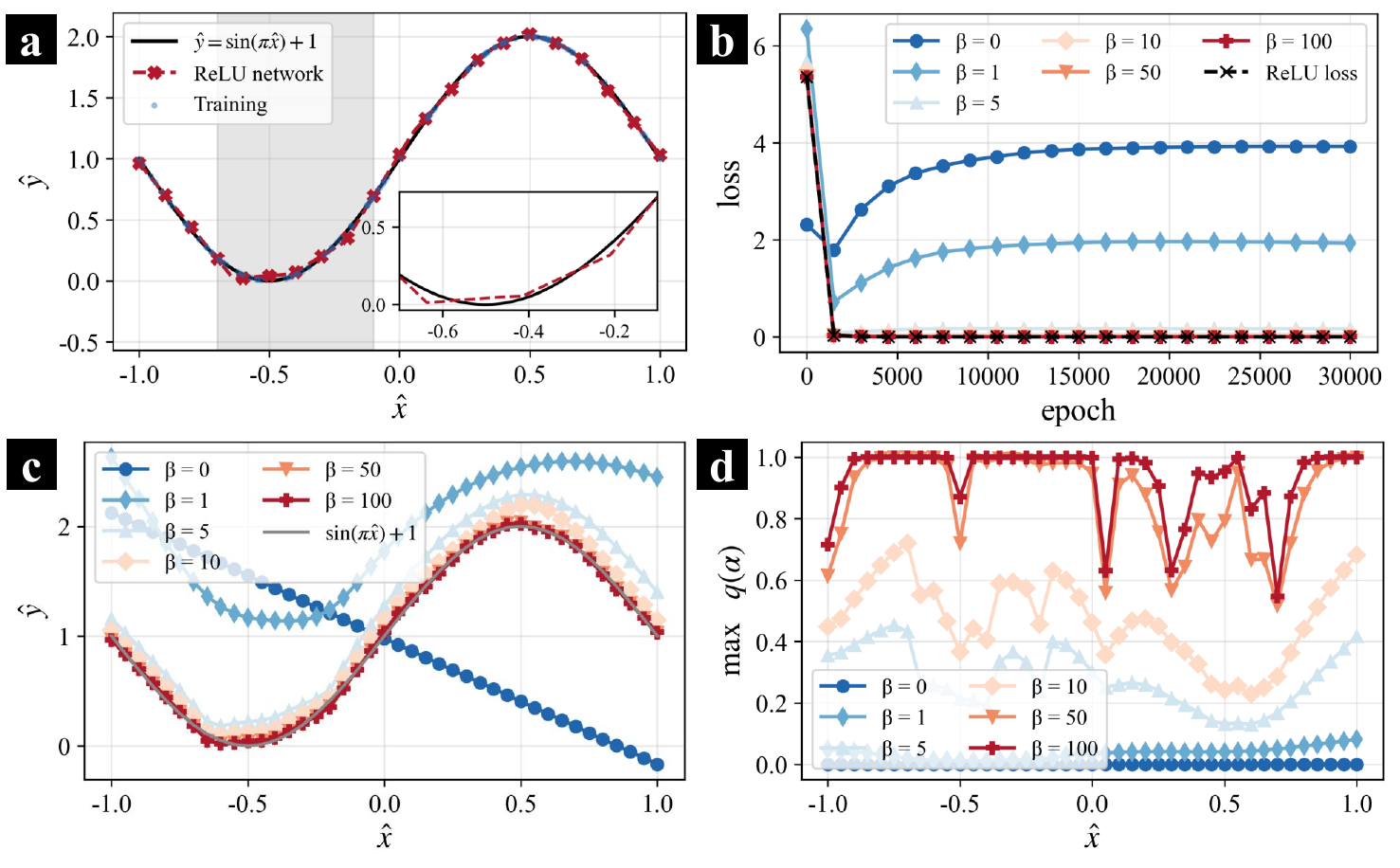}
    \end{center}
    \caption{\textbf{Exact mapping and temperature-driven condensation.}
(a) Prediction of the trained ReLU network (red dashed line) versus the target $\hat{y}=\sin(\pi \hat{x})+1$ (black solid line). Inset: zoom-in showing the piecewise linearity.
(b) Comparison of the training loss (black dashed line) and the ensemble loss (solid lines) across different $\beta$.
(c) The rescaled output $(\prod_{\tau=0}^{T} C^{(\tau)})g$ for various $\beta$, showing the transition from a disordered linear state ($\beta=0$) to the exact ReLU output ($\beta=100$).
(d) Maximum probability $\max q(\boldsymbol{\alpha})$ versus input $\hat{x}$, demonstrating the condensation to the $\widetilde{\boldsymbol{\alpha}}$.}
    \label{fig2}
\end{figure*}

\section{Correspondence with deep ReLU network}
\label{Correspondence with deep ReLU network}
At equilibrium, we obtain the expected value of the population imbalance as
\begin{equation}
    \begin{split}
        \langle\mathbf{\widetilde z}^{(t)}\rangle
    =&\sum_{\boldsymbol{\alpha}}q(\boldsymbol{\alpha})\mathbf{\widetilde z}^{(t)}(\boldsymbol{\alpha})
    =\frac{1}{Z}\sum_{\boldsymbol{\alpha}}\mathbf{\widetilde z}^{(t)}(\boldsymbol{\alpha})\exp \left(\beta\Pi(\boldsymbol{\alpha})\right)
    \end{split}
\end{equation}
where $\langle\cdot\rangle$ indicates the average across all possible ${\boldsymbol{\alpha}}$.
$\langle\mathbf{\widetilde z}^{(t)}\rangle$ corresponds to the hidden states $\widetilde{\mathbf h}^{(t)}$ in the ReLU network.
Especially, we denote the expectation value of the $k$-th component of ${\widetilde z}^{(T)}$ as
\begin{equation}
    g_k= \langle\widetilde z_k^{(T)}(\boldsymbol{\alpha})\rangle
    =\sum_{\boldsymbol{\alpha}}q(\boldsymbol{\alpha})z_k^{(T)}(\boldsymbol{\alpha})
\end{equation}
$g_k$ corresponds to the $k$-th component of the ReLU network output.
Hereafter, we will show the exact correspondence between these observables and the  given deep ReLU network, in the zero-temperature limit, $\beta\to\infty$.
For clarity, the original input is denoted as $\hat{\mathbf{x}}$, which is rescaled as $\mathbf{x}$ so that can be embedded into a mixed state.
The output of the deep ReLU network as $\mathbf{f}(\hat{\mathbf{x}})$, with its $k$-th component as ${f}_k(\hat{\mathbf{x}})$, and the predicted output (training dataset) is denoted as $\hat{\mathbf{y}}$.
\begin{theorem}
At $\beta\to+\infty$ limit, with input $\mathbf{x}$, the ensemble as described by Eq.(\ref{eq_ztalpha}, \ref{eq_partition}) condenses into the unique configuration $\boldsymbol{\alpha}$ that exactly matches the activation pattern of the corresponding deep ReLU network. 
Consequently, the ensemble average of the observables becomes an exact rescaling of the ReLU network hidden states and outputs, for arbitrary depth and width, we have
\begin{equation}
    \lim_{\beta\to+\infty} \langle\mathbf{\widetilde z}^{(t)}\rangle = \left(\prod_{\tau=0}^tC^{(\tau)}\right)^{-1}\widetilde{\mathbf h}^{(t)}
    \label{eq_exacth}
\end{equation}
\begin{equation}
    \lim_{\beta\to+\infty} \left\langle\sum_{k=1}^{D_T}\left(\hat{y}_k-\left(\prod_{\tau=0}^tC^{(\tau)}\right)g_k\right)^2\right\rangle
    =\sum_{k=1}^{D_T}(\hat{y}_k-f_k)^2
    \label{eq_exact_loss}
\end{equation}
\label{theorem1}
\end{theorem}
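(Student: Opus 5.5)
The plan is to show that the effective potential $\Pi(\boldsymbol{\alpha})$ has a unique global maximizer $\widetilde{\boldsymbol{\alpha}}$, equal to the ReLU activation pattern. Laplace's principle then gives the result: as $\beta\to+\infty$, $q(\boldsymbol{\alpha})\to\delta_{\boldsymbol{\alpha},\widetilde{\boldsymbol{\alpha}}}$, because the sum over $\boldsymbol{\alpha}$ is finite and $q(\boldsymbol{\alpha})/q(\widetilde{\boldsymbol{\alpha}})=e^{-\beta(\Pi(\widetilde{\boldsymbol{\alpha}})-\Pi(\boldsymbol{\alpha}))}\to0$ for every $\boldsymbol{\alpha}\neq\widetilde{\boldsymbol{\alpha}}$. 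Every ensemble average then collapses to its value at $\widetilde{\boldsymbol{\alpha}}$. This includes the average of the squared loss, which is a function of $\boldsymbol{\alpha}$ through $g$, so the loss identity follows from the output identity.

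\textbf{Step 1: rewrite $\Pi$.} Summing $p(\boldsymbol{\sigma},\mathbf{s})$ over the untracked blocks $\tau>t$ and using the factor $2^{T-t}$, I expect $2^{T-t}\pi^{(t)}(\boldsymbol{\alpha})$ to reduce to $\mathbf{1}^{\mathsf T}\mathbf{\widetilde z}^{(t)}(\boldsymbol{\alpha})$, or more likely to a sum of the components $\widetilde z^{(t)}_k$ over the open sites. The precise normalization needs checking against Eq.~(\ref{eq_ztalpha}). Using $\mathbf{\widetilde z}^{(t)}=\hat\alpha^{(t)}\widetilde\omega^{(t)}\mathbf{\widetilde z}^{(t-1)}$, this gives the layered form
\begin{equation*}
\Pi(\boldsymbol{\alpha})=\sum_{t=1}^T\sum_{k=1}^{d_t}\alpha^{(t)}_k\,u^{(t)}_k(\boldsymbol{\alpha}^{(<t)}),\qquad \mathbf u^{(t)}=\widetilde\omega^{(t)}\mathbf{\widetilde z}^{(t-1)} .
\end{equation*}
Here $\mathbf u^{(t)}$ is the rescaled pre-activation, which depends only on the earlier layers' configuration.

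\textbf{Step 2: greedy layerwise optimality.} Fix layers $<t$. The term $\alpha^{(t)}_k u^{(t)}_k$ is maximized by $\alpha^{(t)}_k=\Theta(u^{(t)}_k)$. This choice yields $\widetilde z^{(t)}_k=\max(0,u^{(t)}_k)$, which by induction and Eq.~(\ref{eq_cascade}) is the rescaled ReLU hidden unit. The bias coordinate stays equal to the positive constant, so its site is open. The difficulty is that $\alpha^{(t)}$ also influences all later terms $t'>t$. This is exactly why the weights $2^{T-t}$ appear: earlier layers must dominate later ones lexicographically.

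\textbf{Main obstacle.} I would try to show that changing $\alpha^{(t)}_k$ away from its greedy value costs at least $2^{T-t}|u^{(t)}_k|$, while it changes $\sum_{t'>t}2^{T-t'}\pi^{(t')}$ by strictly less. The contraction property $\|\widetilde\omega^{(t')}\|\le1$, together with $\sum_{t'>t}2^{T-t'}<2^{T-t}$, would bound the downstream change by roughly $|u^{(t)}_k|$ times a factor below $2^{T-t}$. That would give uniqueness by induction on $t$, first ruling out deviations at the earliest layer where $\boldsymbol{\alpha}$ differs from $\widetilde{\boldsymbol{\alpha}}$.

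I am uncertain whether the contraction bound in $\ell_1$ or $\ell_\infty$ is what is actually needed here. If it fails, I would weaken the claim to hold for sufficiently large $C^{(t)}$, or to hold generically. Genericity is also needed separately: ties $u^{(t)}_k=0$ break uniqueness, so I would assume inputs off the linear-region boundaries. Those boundaries are precisely where the first-order transition occurs, and there the limit is insensitive to the choice anyway, since $\max(0,0)=0$.
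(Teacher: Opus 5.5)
Your overall architecture matches the paper's: Laplace concentration on the maximizer of $\Pi$, plus an earliest-deviation exchange argument in which the weights $2^{T-t}$ let earlier layers dominate. Your Laplace step and the reduction of the loss identity to the output identity are fine. The gap is the decisive inequality, and it begins with Step~1, which misidentifies $\Pi$. Summing $p(\boldsymbol{\sigma},\mathbf{s})$ over the untracked blocks does not produce $\mathbf{1}$, and nothing cancels $2^{T-t}$. The $\frac12$ per block is already absorbed by the sum over $\sigma^{(\tau)}$. With $\Omega^{(\tau)}_{kj}=\frac12\sum_{\sigma}|\langle k|U_\sigma|j\rangle|^2$ one gets $\pi^{(t)}(\boldsymbol{\alpha})=\mathbf{r}^{(t)\mathsf T}\widetilde{\mathbf z}^{(t)}(\boldsymbol{\alpha})$, where $\mathbf{r}^{(t)\mathsf T}=\mathbf{1}^{\mathsf T}\Omega^{(T)}\cdots\Omega^{(t+1)}$. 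This is the probability of surviving to column $T$ from each site of column $t$; it is site dependent and generally $<1$, because fermions escape. Thus $\Pi(\boldsymbol{\alpha})=\sum_t 2^{T-t}\sum_k r^{(t)}_k\,\alpha^{(t)}_k u^{(t)}_k$. Your main-obstacle paragraph already charges $2^{T-t}|u^{(t)}_k|$ for a wrong choice, which contradicts your own Step~1 formula.

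With the correct $\Pi$, a wrong choice at site $k$ of layer $t$ costs $2^{T-t}r^{(t)}_k|u^{(t)}_k|$, and $r^{(t)}_k$ can be arbitrarily small relative to the survival weights of other sites. The operator-norm contraction $\|\widetilde\omega^{(t')}\|_2\le1$ only bounds the downstream change by roughly $\sum_{t'>t}2^{T-t'}\|\mathbf{r}^{(t')}\|_2\,|u^{(t)}_k|$. An $\ell_1$ contraction only gives $(2^{T-t}-1)|u^{(t)}_k|$. Neither carries the factor $r^{(t)}_k$, so neither beats the cost, and the step fails as you propose it.

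The missing idea has two parts. First, signed hopping weights are dominated entrywise by unsigned ones, $|\widetilde\omega^{(\tau)}_{kj}|\le\Omega^{(\tau)}_{kj}$. Second, the survival weights propagate backward exactly, $\mathbf{r}^{(t')\mathsf T}\Omega^{(t')}\cdots\Omega^{(t+1)}=\mathbf{r}^{(t)\mathsf T}$. Together these show that, for every $t'>t$ and every downstream configuration, flipping $\alpha^{(t)}_k$ changes $\mathbf{r}^{(t')\mathsf T}\widetilde{\mathbf z}^{(t')}$ by at most $r^{(t)}_k|u^{(t)}_k|$. The total downstream change is therefore at most $(2^{T-t}-1)r^{(t)}_k|u^{(t)}_k|$, strictly below the cost. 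This is exactly the paper's Lemma: the coefficient $A^{(t)}_{s}$ of $\alpha^{(t)}_s u^{(t)}_s$ satisfies $A^{(t)}_{s}\ge r^{(t)}_s\ge0$ whatever the downstream $\boldsymbol{\alpha}$. With it, no weakening to large $C^{(t)}$ is needed.

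Strictness also requires $r^{(t)}_k>0$, not only $u^{(t)}_k\neq0$. A hidden neuron with no outgoing weights has zero survival weight and a free $\alpha^{(t)}_k$. That is harmless for the output but not for the intermediate $\langle\widetilde{\mathbf z}^{(t)}\rangle$.
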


\begin{proof}
Denote $\widetilde{\boldsymbol{\alpha}}=\arg\max\limits_{\boldsymbol{\alpha}}\Pi(\boldsymbol{\alpha})$.
At $\beta\to+\infty$ limit, only the post-selection criteria $\widetilde{\boldsymbol{\alpha}}$ survives, yielding that $\lim_{\beta\to+\infty}q(\widetilde{\boldsymbol{\alpha}})=1$.
The core of this proof is to find $\widetilde{\boldsymbol{\alpha}}$.

Define a series $A^{(t)}_{s^{(t)}}$ for each $s^{(t)}=1,\ldots,d_t$.
We set $A^{(T)}_{s^{T}}=1$, and iteratively, for $t=2,3,\ldots,T$,
\begin{equation}
    \begin{split}
        A^{(t-1)}_{s^{(t-1)}}=
        &\sum_{s^{(t)}}\alpha^{(t-1)}_{s^{(t-1)}}
\tilde{\omega}_{s^{(t-2)}\to s^{(t-1)}}^{(t-1)}A^{(t)}_{s^{(t)}}
\\&+
2^{T-t+1}\sum_{s^{(t)}}\cdots\sum_{s^{(T)}}\prod_{\tau=t}^T{\Omega}_{s^{(\tau-1)}\to s^{(\tau)}}^{(\tau)}
    \end{split}
\label{eq_At_itera}
\end{equation}
where $\Omega^{(\tau)}=\widetilde\omega_+ + \widetilde\omega_-$, and $\Omega_{s^{(\tau-1)}\to s^{(\tau)}}^{(\tau)}=\Omega_{s^{(\tau)},s^{(\tau-1)}}^{(\tau)}$, we rewrite the subscripts to express the propagation direction of the network.
For $t=1,2,\ldots,T-1$, $s^{(t)}=1,2\ldots,d_t$, we have 
$A^{(t)}_{s^{(t)}}\geq \sum_{s^{(t+1)}}\cdots\sum_{s^{(T)}}\prod_{\tau=t+1}^T{\Omega}_{s^{(\tau-1)}\to s^{(\tau)}}^{(\tau)} \geq0$ (The explicit proof of this lemma are provided in the Supplemental Material).
We can simplify $q(\boldsymbol{\alpha})$ with these series.
Substituting $A^{(1)}_{s^{(1)}}$, we have
\begin{equation}
    q(\boldsymbol{\alpha})=\prod_{s^{(1)}}\exp\left(\beta
    A^{(1)}_{s^{(1)}}\alpha^{(1)}_{s^{(1)}}
    \sum_{s^{(0)}}
        \tilde{\omega}_{s^{(0)}\to s^{(1)}}^{(1)}
        x_{s^{(0)}}
    \right)
    \label{eq_qA1}
\end{equation}
Thus, $\widetilde{\boldsymbol{\alpha}}$ must maximize each terms in the product.
Recalling that $\alpha^{(1)}_{s^{(1)}}\in\{0,1\}$, we have
\begin{equation}
    \widetilde\alpha^{(1)}_{s^{(1)}}=
    \begin{cases} 
  1, & \sum_{s^{(0)}}
        \tilde{\omega}_{s^{(0)}\to s^{(1)}}^{(1)}
        x_{s^{(0)}}\geq0 \\ 
  0,      & \sum_{s^{(0)}}
        \tilde{\omega}_{s^{(0)}\to s^{(1)}}^{(1)}
        x_{s^{(0)}}<0
\end{cases}
\label{eq_talpha1}
\end{equation}
which further yielding that
\begin{equation}
   \begin{split}
       &\widetilde\alpha^{(1)}_{s^{(1)}}\sum_{s^{(0)}} \tilde{\omega}_{s^{(0)}\to s^{(1)}}^{(1)}x_{s^{(0)}}
   \\=&\frac{1}{C^{(0)}C^{(1)}}\text{ReLU}
   \left(\sum_{s^{(0)}}\widetilde W^{(1)}_{s^{(0)}\to s^{(1)}}\hat{x}_{s^{(0)}}\right)
   \\=&
   \frac{1}{C^{(0)}C^{(1)}}\widetilde h^{(1)}_{s^{(1)}}
   \end{split}
\end{equation}
Next, expand Eq.(\ref{eq_qA1}) with $A^{(2)}_{s^{(2)}}$ by substituting the iteration as given in Eq.(\ref{eq_At_itera}).
Notice that the second term in Eq.(\ref{eq_At_itera}) is fixed, and substitute the obtained $\widetilde\alpha^{(1)}_{s^{(1)}}$, we find out that $\widetilde\alpha^{(2)}_{s^{(2)}}$ must maximize the product
\begin{equation}
    \prod_{s^{(2)}}\exp\left(
       \frac{\beta}{C^{(0)}C^{(1)}} 
    A^{(2)}_{s^{(2)}}\alpha^{(2)}_{s^{(2)}}
    \sum_{s^{(1)}}\tilde{\omega}_{s^{(1)}\to s^{(2)}}^{(2)}
    \widetilde h_{s^{(1)}}^{(1)}
        \right)
\end{equation}
Therefore, we obtain $\widetilde\alpha^{(2)}_{s^{(2)}}$ as
\begin{equation}
    \widetilde\alpha^{(2)}_{s^{(2)}}=
    \begin{cases} 
  1, & \sum_{s^{(1)}}\tilde{\omega}_{s^{(1)}\to s^{(2)}}^{(2)}
    \widetilde h_{s^{(1)}}^{(1)}\geq0 \\ 
  0,      & \sum_{s^{(1)}}\tilde{\omega}_{s^{(1)}\to s^{(2)}}^{(2)}
    \widetilde h_{s^{(1)}}^{(1)}<0
\end{cases}
\label{eq_alpha2}
\end{equation}
yielding that
\begin{equation}
   \begin{split}
&\widetilde\alpha^{(2)}_{s^{(2)}}\sum_{s^{(1)}}\tilde{\omega}_{s^{(1)}\to s^{(2)}}^{(2)}
    \widetilde\alpha^{(1)}_{s^{(1)}}\sum_{s^{(0)}} \tilde{\omega}_{s^{(0)}\to s^{(1)}}^{(1)}x_{s^{(0)}}
   \\=&\frac{1}{C^{(0)}C^{(1)}C^{(2)}}\text{ReLU}
   \left(\sum_{s^{(1)}}\widetilde W^{(2)}_{s^{(1)}\to s^{(2)}}\widetilde h_{s^{(1)}}^{(1)}\right)
   \\=&
   \frac{1}{C^{(0)}C^{(1)}C^{(2)}}\widetilde h^{(2)}_{s^{(2)}}
   \end{split}
\end{equation}
Repeating this process iteratively, we prove that $\widetilde{\boldsymbol{\alpha}}$ is exactly same as the activated ($\widetilde\alpha^{(t)}_{s^{(t)}}=1$)/inactivated ($\widetilde\alpha^{(t)}_{s^{(t)}}=0$) state of the corresponding deep ReLU network. Recalling Eq.(\ref{eq_ztalpha}), we have
\begin{equation}
    \mathbf{\widetilde z}^{(t)}(\widetilde{\boldsymbol{\alpha}})
    =\left(\prod_{\tau=0}^tC^{(\tau)}\right)^{-1}\widetilde{\mathbf h}^{(t)}
\end{equation}
As $\lim_{\beta\to+\infty} \langle\mathbf{\widetilde z}^{(t)}\rangle =\mathbf{\widetilde z}^{(t)}(\widetilde{\boldsymbol{\alpha}})$, Eq.(\ref{eq_exacth}) is proved.
Recalling that $g_k$ is the $k$-th component of ${\widetilde z}^{(T)}$, we have $\left(\prod_{\tau=0}^tC^{(\tau)}\right)g_k(\widetilde{\boldsymbol{\alpha}})=f_k$. 
As $\lim_{\beta\to+\infty} \langle g_k\rangle =g_k(\widetilde{\boldsymbol{\alpha}})$, and $\lim_{\beta\to+\infty} \langle g_k\rangle^2= \lim_{\beta\to+\infty} \langle (g_k)^2\rangle =(g_k(\widetilde{\boldsymbol{\alpha}}))^2$,  Eq.(\ref{eq_exact_loss}) is proved.
\end{proof}

To validate Theorem \ref{theorem1}, we benchmark a $1\times6\times6\times1$ ReLU network (see Fig.(\ref{fig2}a) for the target function $\hat{y}=\sin(\pi \hat{x})+1$ and network prediction).
We then construct the corresponding physical ensemble and evaluate its observables across various temperatures $\beta=0, 1, 5, 10, 50, 100$.
In Fig.(\ref{fig2}b), the ensemble loss $\left\langle\left(f- \left(\prod_{\tau=0}^TC^{(\tau)}\right)g\right)^2\right\rangle_{\alpha}$ is compared with the exact ReLU training loss. 
Remarkably, for $\beta \geq 10$, the two losses coincide, validating Eq.~(\ref{eq_exact_loss}). 
Fig.(\ref{fig2}c) displays the rescaled output $\left\langle \left(\prod_{\tau=0}^TC^{(\tau)}\right)g \right\rangle_{\alpha}$ as a function of $\hat{x}$.
At $\beta=0$ (infinite temperature), all $\boldsymbol{\alpha}$ share same probability, yielding a trivial linear response. 
As $\beta$ increases, the system condenses towards the exact ReLU output.
Moreover, Fig.(\ref{fig1}d) presents the maximum probability $\max q(\boldsymbol{\alpha})$ as a function of the input $\hat{x}$. For large $\beta$ ($\beta \ge 50$), $\max q(\boldsymbol{\alpha})$ approaches unity across the majority of the input domain, confirming the complete condensation into the optimal $\boldsymbol{\widetilde \alpha}$.
We also notice that there are still a few sharp dips.
These dips are attributed to nearly degenerate suboptimal configurations $\boldsymbol{\alpha}$ that yield population imbalances closely matching the optimal $\boldsymbol{\widetilde \alpha}$.

Crucially, this condensation is accompanied by a first-order phase transition.
In Fig.(\ref{fig_phase}), we plot the derivative $\langle\partial\Pi/\partial x\rangle$, against $\hat{x}\in[-0.7, -0.1]$.
At $\beta=0$, the curve is flat. 
However, as $\beta$ increases, the curve develops sharp discontinuities. 
At $\beta=100$, the response exhibits step-like jumps.
By comparing with Fig.(\ref{fig2}a), these exhibits discontinuities precisely coincide with the transitions between different linear pieces of the ReLU network. 
Therefore, the piecewise linear behavior of deep ReLU networks originates from first-order phase transitions in the proposed fermionic transport model.

\begin{figure}[t]
    \centering
    \includegraphics[width=0.48\textwidth]{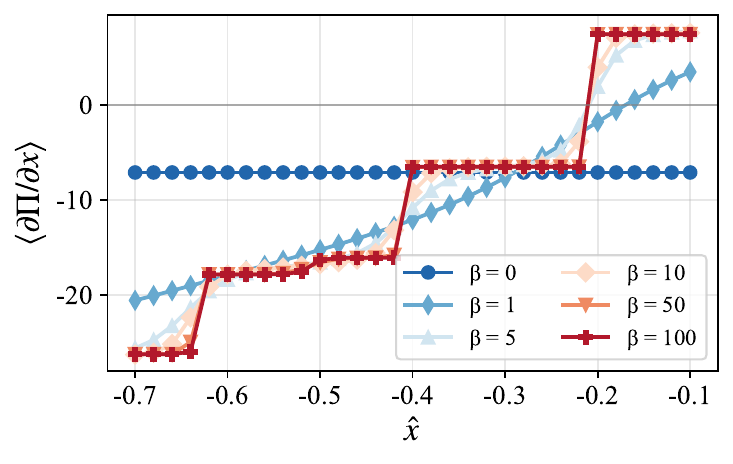}
    \caption{\textbf{First-order phase transition underlying ReLU piecewise linearity.} The derivative $\langle\partial\Pi/\partial x\rangle$ as a function of input $\hat{x}$ for various $\beta$.
    At low $\beta$, The derivative is continuous ($\beta=0$, flat line).
    As $\beta$ increases, discontinuities emerge ($\beta=100$, red crosses). These discontinuities align exactly with the boundaries of the linear pieces in the ReLU network (see Fig.~\ref{fig1}(a)), demonstrating that the piecewise linear behavior is a macroscopic manifestation of a first-order phase transition.}
    \label{fig_phase}
\end{figure}

\section{Conclusions}
We have established an exact correspondence between an underlying statistical-mechanical system and arbitrary deep ReLU networks.
Starting from a microscopic configuration space with well-defined state multiplicities, the model admits equivalent realizations as cascaded quantum operations with post-selection and as directed fermionic transport.
Without imposing a ReLU activation function, its equilibrium observables converge in the $\beta\to+\infty$ limit to the hidden states, output, and loss of the corresponding ReLU network, for arbitrary inputs, depths, and layer widths.

The correspondence further reveals a physical origin of the piecewise-linear structure of ReLU computation.
At $\beta\to+\infty$, the ensemble condenses into the activation configuration selected by the network, while changes between distinct activation patterns appear as discontinuous changes of the thermodynamic response.
These transition points coincide with the boundaries between linear regions of the ReLU network, establishing a direct correspondence between neural activation sectors and physical phases.
Therefore, we show that deep ReLU computation emerges as the zero-temperature behavior of physical systems.
This connection establishes a framework in which properties of neural architectures can be investigated through statistical-mechanical quantities and suggests a route toward designing computational architectures from microscopic physical principles.

{\it Acknowledgments}
J.L gratefully acknowledges funding by National Natural Science Foundation of China (NSFC) under Grant No.12305012.

\bibliography{ref}

@article{lecun2015deep,
  title={Deep learning},
  author={LeCun, Yann and Bengio, Yoshua and Hinton, Geoffrey},
  journal={nature},
  volume={521},
  number={7553},
  pages={436--444},
  year={2015},
  publisher={Nature Publishing Group UK London}
}

@article{schmidhuber2015deep,
  title={Deep learning in neural networks: An overview},
  author={Schmidhuber, J{\"u}rgen},
  journal={Neural networks},
  volume={61},
  pages={85--117},
  year={2015},
  publisher={Elsevier}
}

@book{goodfellow2016deep,
  title={Deep learning},
  author={Goodfellow, Ian and Bengio, Yoshua and Courville, Aaron and Bengio, Yoshua},
  volume={1},
  number={2},
  year={2016},
  publisher={MIT press Cambridge}
}

@article{saxe2021if,
  title={If deep learning is the answer, what is the question?},
  author={Saxe, Andrew and Nelli, Stephanie and Summerfield, Christopher},
  journal={Nature Reviews Neuroscience},
  volume={22},
  number={1},
  pages={55--67},
  year={2021},
  publisher={Nature Publishing Group UK London}
}

@article{sejnowski2020unreasonable,
  title={The unreasonable effectiveness of deep learning in artificial intelligence},
  author={Sejnowski, Terrence J},
  journal={Proceedings of the National Academy of Sciences},
  volume={117},
  number={48},
  pages={30033--30038},
  year={2020},
  publisher={National Academy of Sciences}
}

@inproceedings{nair2010rectified,
  title={Rectified linear units improve restricted boltzmann machines},
  author={Nair, Vinod and Hinton, Geoffrey E},
  booktitle={Proceedings of the 27th international conference on machine learning (ICML-10)},
  pages={807--814},
  year={2010}
}

@inproceedings{glorot2011deep,
  title={Deep sparse rectifier neural networks},
  author={Glorot, Xavier and Bordes, Antoine and Bengio, Yoshua},
  booktitle={Proceedings of the fourteenth international conference on artificial intelligence and statistics},
  pages={315--323},
  year={2011},
  organization={JMLR Workshop and Conference Proceedings}
}

@article{montufar2014number,
  title={On the number of linear regions of deep neural networks},
  author={Mont{\'u}far, Guido and Pascanu, Razvan and Cho, Kyunghyun and Bengio, Yoshua},
  journal={Advances in neural information processing systems},
  volume={27},
  year={2014}
}

@article{pascanu2013number,
  title={On the number of response regions of deep feed forward networks with piece-wise linear activations},
  author={Pascanu, Razvan and Montufar, Guido and Bengio, Yoshua},
  journal={arXiv preprint arXiv:1312.6098},
  year={2013}
}

@inproceedings{serra2018bounding,
  title={Bounding and counting linear regions of deep neural networks},
  author={Serra, Thiago and Tjandraatmadja, Christian and Ramalingam, Srikumar},
  booktitle={International conference on machine learning},
  pages={4558--4566},
  year={2018},
  organization={PMLR}
}

@inproceedings{laurent2018multilinear,
  title={The multilinear structure of ReLU networks},
  author={Laurent, Thomas and Brecht, James},
  booktitle={International conference on machine learning},
  pages={2908--2916},
  year={2018},
  organization={PMLR}
}

@article{he2020piecewise,
  title={Piecewise linear activations substantially shape the loss surfaces of neural networks},
  author={He, Fengxiang and Wang, Bohan and Tao, Dacheng},
  journal={arXiv preprint arXiv:2003.12236},
  year={2020}
}

@article{hopfield1982neural,
  title={Neural networks and physical systems with emergent collective computational abilities.},
  author={Hopfield, John J},
  journal={Proceedings of the national academy of sciences},
  volume={79},
  number={8},
  pages={2554--2558},
  year={1982}
}

@article{amit1985storing,
  title={Storing infinite numbers of patterns in a spin-glass model of neural networks},
  author={Amit, Daniel J and Gutfreund, Hanoch and Sompolinsky, Haim},
  journal={Physical review letters},
  volume={55},
  number={14},
  pages={1530},
  year={1985},
  publisher={APS}
}

@article{watkin1993statistical,
  title={The statistical mechanics of learning a rule},
  author={Watkin, Timothy LH and Rau, Albrecht and Biehl, Michael},
  journal={Reviews of Modern Physics},
  volume={65},
  number={2},
  pages={499},
  year={1993},
  publisher={APS}
}

@article{bahri2020statistical,
  title={Statistical mechanics of deep learning},
  author={Bahri, Yasaman and Kadmon, Jonathan and Pennington, Jeffrey and Schoenholz, Sam S and Sohl-Dickstein, Jascha and Ganguli, Surya},
  journal={Annual review of condensed matter physics},
  volume={11},
  number={1},
  pages={501--528},
  year={2020},
  publisher={Annual Reviews}
}

@article{li2021statistical,
  title={Statistical mechanics of deep linear neural networks: The backpropagating kernel renormalization},
  author={Li, Qianyi and Sompolinsky, Haim},
  journal={Physical Review X},
  volume={11},
  number={3},
  pages={031059},
  year={2021},
  publisher={APS}
}

@article{pacelli2023statistical,
  title={A statistical mechanics framework for Bayesian deep neural networks beyond the infinite-width limit},
  author={Pacelli, Rosalba and Ariosto, Sebastiano and Pastore, Mauro and Ginelli, Francesco and Gherardi, Marco and Rotondo, Pietro},
  journal={Nature Machine Intelligence},
  volume={5},
  number={12},
  pages={1497--1507},
  year={2023},
  publisher={Nature Publishing Group UK London}
}

@article{peterson1987mean,
  title={A mean field theory learning algorithm for neural networks},
  author={Peterson, Carsten and Anderson, James R},
  journal={Complex systems},
  volume={1},
  number={5},
  year={1987}
}

@inproceedings{mei2019mean,
  title={Mean-field theory of two-layers neural networks: dimension-free bounds and kernel limit},
  author={Mei, Song and Misiakiewicz, Theodor and Montanari, Andrea},
  booktitle={Conference on learning theory},
  pages={2388--2464},
  year={2019},
  organization={PMLR}
}

@article{sirignano2020mean,
  title={Mean field analysis of neural networks: A law of large numbers},
  author={Sirignano, Justin and Spiliopoulos, Konstantinos},
  journal={SIAM Journal on Applied Mathematics},
  volume={80},
  number={2},
  pages={725--752},
  year={2020},
  publisher={SIAM}
}

@article{oostwal2021hidden,
  title={Hidden unit specialization in layered neural networks: ReLU vs. sigmoidal activation},
  author={Oostwal, Elisa and Straat, Michiel and Biehl, Michael},
  journal={Physica A: Statistical Mechanics and its Applications},
  volume={564},
  pages={125517},
  year={2021},
  publisher={Elsevier}
}

@article{luo2021phase,
  title={Phase diagram for two-layer relu neural networks at infinite-width limit},
  author={Luo, Tao and Xu, Zhi-Qin John and Ma, Zheng and Zhang, Yaoyu},
  journal={Journal of Machine Learning Research},
  volume={22},
  number={71},
  pages={1--47},
  year={2021}
}

@inproceedings{baity2018comparing,
  title={Comparing dynamics: Deep neural networks versus glassy systems},
  author={Baity-Jesi, Marco and Sagun, Levent and Geiger, Mario and Spigler, Stefano and Arous, G{\'e}rard Ben and Cammarota, Chiara and LeCun, Yann and Wyart, Matthieu and Biroli, Giulio},
  booktitle={International Conference on Machine Learning},
  pages={314--323},
  year={2018},
  organization={PMLR}
}

@article{geiger2019jamming,
  title={Jamming transition as a paradigm to understand the loss landscape of deep neural networks},
  author={Geiger, Mario and Spigler, Stefano and d'Ascoli, St{\'e}phane and Sagun, Levent and Baity-Jesi, Marco and Biroli, Giulio and Wyart, Matthieu},
  journal={Physical Review E},
  volume={100},
  number={1},
  pages={012115},
  year={2019},
  publisher={APS}
}

@article{geiger2020scaling,
  title={Scaling description of generalization with number of parameters in deep learning},
  author={Geiger, Mario and Jacot, Arthur and Spigler, Stefano and Gabriel, Franck and Sagun, Levent and d’Ascoli, St{\'e}phane and Biroli, Giulio and Hongler, Cl{\'e}ment and Wyart, Matthieu},
  journal={Journal of Statistical Mechanics: Theory and Experiment},
  volume={2020},
  number={2},
  pages={023401},
  year={2020},
  publisher={IOP Publishing and SISSA}
}

@article{ghio2024sampling,
  title={Sampling with flows, diffusion, and autoregressive neural networks from a spin-glass perspective},
  author={Ghio, Davide and Dandi, Yatin and Krzakala, Florent and Zdeborov{\'a}, Lenka},
  journal={Proceedings of the National Academy of Sciences},
  volume={121},
  number={27},
  pages={e2311810121},
  year={2024},
  publisher={National Academy of Sciences}
}

@article{aaronson2005quantum,
  title={Quantum computing, postselection, and probabilistic polynomial-time},
  author={Aaronson, Scott},
  journal={Proceedings: Mathematics, Physical and Engineering},
  pages={3473--3482},
  year={2005},
  publisher={JSTOR}
}

@article{aliferis2007accuracy,
  title={Accuracy threshold for postselected quantum computation},
  author={Aliferis, Panos and Gottesman, Daniel and Preskill, John},
  journal={arXiv preprint quant-ph/0703264},
  year={2007}
}

@article{yoshioka2019transforming,
  title={Transforming generalized Ising models into Boltzmann machines},
  author={Yoshioka, Nobuyuki and Akagi, Yutaka and Katsura, Hosho},
  journal={Physical Review E},
  volume={99},
  number={3},
  pages={032113},
  year={2019},
  publisher={APS}
}

@article{wright2022deep,
  title={Deep physical neural networks trained with backpropagation},
  author={Wright, Logan G and Onodera, Tatsuhiro and Stein, Martin M and Wang, Tianyu and Schachter, Darren T and Hu, Zoey and McMahon, Peter L},
  journal={Nature},
  volume={601},
  number={7894},
  pages={549--555},
  year={2022},
  publisher={Nature Publishing Group UK London}
}

@article{abbasi2024physical,
  title={Physical activation functions (PAFs): An approach for more efficient induction of physics into physics-informed neural networks (PINNs)},
  author={Abbasi, Jassem and Andersen, P{\aa}l {\O}steb{\o}},
  journal={Neurocomputing},
  volume={608},
  pages={128352},
  year={2024},
  publisher={Elsevier}
}

@article{tao2022piecewise,
  title={Piecewise linear neural networks and deep learning},
  author={Tao, Qinghua and Li, Li and Huang, Xiaolin and Xi, Xiangming and Wang, Shuning and Suykens, Johan AK},
  journal={Nature Reviews Methods Primers},
  volume={2},
  number={1},
  pages={42},
  year={2022},
  publisher={Nature Publishing Group UK London}
}

@article{he2020relu,
  title={ReLU deep neural networks and linear finite elements},
  author={He, Juncai and Li, Lin and Xu, Jinchao and Zheng, Chunyue},
  journal={Journal of Computational Mathematics},
  volume={38},
  number={3},
  pages={502--527},
  year={2020},
  publisher={JSTOR}
}

\clearpage 
\onecolumngrid 
\begin{center}
    \textbf{\large Supplementary Materials}
\end{center}

\beginsupplement 


\section{Physical realization of affine transformations}

\subsection{Affine augmentation of a ReLU network}

Consider an arbitrary $T$-layer ReLU network with input $\mathbf{\hat{x}}$,
\begin{equation}
    \mathbf{h}^{(0)}=\mathbf{\hat{x}},
    \qquad
    \mathbf{h}^{(t)}
    =
    \operatorname{ReLU}
    \left(
        W^{(t)}\mathbf{h}^{(t-1)}
        +
        \mathbf{b}^{(t)}
    \right),
    \qquad
    t=1,\ldots,T,
\end{equation}
where
\begin{equation}
    W^{(t)}\in\mathbb{R}^{D_t\times D_{t-1}},
    \qquad
    \mathbf{b}^{(t)}\in\mathbb{R}^{D_t}.
\end{equation}
The superscript $t$ labels the network layer throughout this Supplemental
Material.

To represent the affine transformation by a linear map, we introduce the
augmented vectors
\begin{equation}
    \widetilde{\mathbf{h}}^{(t)}
    =
    \begin{pmatrix}
        \mathbf{h}^{(t)}\\
        1
    \end{pmatrix}
    \in\mathbb{R}^{D_t+1},
\end{equation}
and the corresponding affine matrices
\begin{equation}
    \widetilde{W}^{(t)}
    =
    \begin{pmatrix}
        W^{(t)} & \mathbf{b}^{(t)}\\
        \mathbf{0}^{\mathsf T} & 1
    \end{pmatrix}.
\end{equation}
Especially, the input $\mathbf{\hat{x}}$ is embedded into $\widetilde{\mathbf{h}}^{(0)}$,
\begin{equation}
    \widetilde{\mathbf{h}}^{(0)}
    =
    \begin{pmatrix}
        \mathbf{\hat{x}}\\
        1
    \end{pmatrix}
\end{equation}
Thus,
\begin{equation}
    \widetilde{W}^{(t)}
    \widetilde{\mathbf{h}}^{(t-1)}
    =
    \begin{pmatrix}
        W^{(t)}\mathbf{h}^{(t-1)}
        +
        \mathbf{b}^{(t)}
        \\
        1
    \end{pmatrix}.
\end{equation}
The last component is invariant under the transformation.
Consequently, the affine transformations of a ReLU layer is converted to linear transformations $\widetilde{W}^{(t)}$.
Hereafter, we will show how to map these linear transformations into quantum computing platforms, where we embed $\widetilde{W}^{(t)}$ into unitary operations, and the inputs are mapped as mixed states.
We denote $\mathbf{x}=\widetilde{\mathbf h}^{(0)}/C^{(0)}$ as the rescaled inputs, where $C^{(0)}\geq \sum_{j=1}^{d_0}|h^{(0)}_j|$.

\subsection{Embedding Linear transformation into two unitary operations}

Let
\begin{equation}
    M
    \in
    \mathbb{R}^{d_{\mathrm{out}}\times d_{\mathrm{in}}}
\end{equation}
be an arbitrary real matrix. We decompose it elementwise as
\begin{equation}
    M=M_{+}-M_{-},
\end{equation}
where
\begin{equation}
    (M_{+})_{kj}
    =
    \max(M_{kj},0),
    \qquad
    (M_{-})_{kj}
    =
    \max(-M_{kj},0),
\end{equation}
so that
\begin{equation}
    M_{+}\geq0,
    \qquad
    M_{-}\geq0.
\end{equation}

For the $t$-th network layer we apply this decomposition to the augmented
matrix,
\begin{equation}
    \widetilde{W}^{(t)}
    =
    \widetilde{W}^{(t)}_{+}
    -
    \widetilde{W}^{(t)}_{-}.
\end{equation}
We then introduce the rescaling factor
\begin{equation}
    C^{(t)}
    \geq
    \max
    \left\{
        \left\|
            \widetilde{W}^{(t)}_{+}
        \right\|_2,
        \left\|
            \widetilde{W}^{(t)}_{-}
        \right\|_2
    \right\},
\end{equation}
The L-2 norm is non negative, and is 0 iff all components of the matrix is 0.
To the end, we assume that $C^{(t)}>0$. 
The rescaled matrices
\begin{equation}
    \widetilde{\omega}^{(t)}_{\pm}
    =
    \frac{
        \widetilde{W}^{(t)}_{\pm}
    }{
        2C^{(t)}
    }
\end{equation}
therefore satisfy
\begin{equation}
    \left\|
        \widetilde{\omega}^{(t)}_{\pm}
    \right\|_2
    \leq1.
\end{equation}
Hence $\widetilde{\omega}^{(t)}_{\pm}$ are contractions. The affine matrix
can consequently be recovered as
\begin{equation}
    \widetilde{W}^{(t)}
    =
    \frac12 C^{(t)}
    \left(
        \widetilde{\omega}^{(t)}_{+}
        -
        \widetilde{\omega}^{(t)}_{-}
    \right).
    \label{eqs_affine_decomp}
\end{equation}

Consider  $\widetilde{\omega}\pm\in\mathbb{R}^{d_{\mathrm{out}}\times d_{\mathrm{in}}}$.
As $\|\widetilde{\omega}\pm\|_2\leq1$,
The contraction can be embedded into a unitary operator on a larger Hilbert space.
In particular, after enlarging the input and output spaces by ancillary degrees of freedom, there exists a unitary operator $\omega_\pm$,
whose nontrivial block coincides with $\widetilde{\omega}$,
\begin{equation}
    \widetilde{\omega}_\pm
    =
    P_{\mathrm{out}}\,
    \omega_\pm\,
    P_{\mathrm{in}},
    \label{eqs_block}
\end{equation}
where $P_{\mathrm{in}}$ and $P_{\mathrm{out}}$ project onto the computational
subspaces retained after post-selection.

For completeness, a standard unitary dilation can be written in block form
as
\begin{equation}
    \omega\pm
    =
    \begin{pmatrix}
        \widetilde{\omega}_\pm
        &
        \sqrt{
            I-\widetilde{\omega}_\pm\widetilde{\omega}_\pm^{\mathsf T}
        }
        \\[2mm]
        \sqrt{
            I-\widetilde{\omega}_\pm^{\mathsf T}\widetilde{\omega}_\pm
        }
        &
        -\widetilde{\omega}_\pm^{\mathsf T}
    \end{pmatrix},
    \label{eqs_dilation}
\end{equation}
with the block dimensions enlarged as necessary when $d_{\mathrm{in}}\neq d_{\mathrm{out}}$. 
The contraction condition guarantees
that the square-root operators are well defined, and the resulting matrix is
unitary on the enlarged space.

\subsection{Single-block signed transformation}

We next show how the controlled-unitary circuit converts the two unitary
branches into a signed linear transformation.

Consider a target register with computational basis
$\{|j\rangle\}_{j=1}^{d}$ and a control qubit $q_C$. The input state is
\begin{equation}
    \rho^{(\mathrm{in})}
    =
    \sum_{j=1}^{d}
    \left(
        z_{j,0}^{(in)}|0\rangle\langle0|_C
        +
        z_{j,1}^{(in)}|1\rangle\langle1|_C
    \right)
    \otimes
    |j\rangle\langle j|_T,
    \label{eqs_input}
\end{equation}
with
\begin{equation}
    z_{j,0}^{(in)},z_{j,1}^{(in)}\geq0,
    \qquad
    \sum_{j=1}^{d}
    (z_{j,0}^{(in)}+z_{j,1}^{(in)})=1.
\end{equation}
We define the real vector $\mathbf{z}^{(in)}$ by this ensemble as
\begin{equation}
    z^{(in)}_j
    =
    \operatorname{Tr}
\left[
\left(
Z_C\otimes |j\rangle\langle j|_T
\right)
\rho^{(\mathrm{in})}
\right]
    =
    z_{j,0}^{(in)}-z_{j,1}^{(in)},
\end{equation}
where $Z_C$ is the Pauli-Z operator (not the partition function, we only use $Z$ to represent Pauli-Z operation in this subsection), and the subscript indicates that it acts on the control qubit.
Equivalently, we have
\begin{equation}
    \mathbf{z}^{(in)}
    =
    \sum_{\sigma=0,1}
    (-1)^\sigma
    \left(
        z_{1,\sigma}^{(in)},
        \ldots,
        z_{d,\sigma}^{(in)}
    \right).
    \label{eqs_signed_vector}
\end{equation}
$\mathbf{z}^{(in)}$ represents the population imbalance carried by target state \(j\), with the control qubit providing the sign degree of freedom, \(|0\rangle_C\) carries positive weight and \(|1\rangle_C\) carries negative weight.

Then apply the operations as depicted in Fig.(1a).
A Hadamard gate is applied to the control register,
\begin{equation}
    |\sigma\rangle
    \xrightarrow{H}
    \frac{1}{\sqrt2}
    \left(
        |0\rangle
        +
        (-1)^\sigma |1\rangle
    \right).
\end{equation}
The system is then acted upon by a controlled unitary,
\begin{equation}
    U_C
    =
    |0\rangle\langle0|_C\otimes U_0
    +
    |1\rangle\langle1|_C\otimes U_1.
    \label{eqs_controlledU}
\end{equation}
Similarly, we define the output vector $\mathbf{z}^{(out)}$ by this ensemble as
\begin{equation}
    z^{(out)}_k
    =
    \operatorname{Tr}
\left[
\left(
Z_C\otimes |k\rangle\langle k|_T
\right)
\rho^{(\mathrm{out})}
\right]
\end{equation}

which gives
\begin{align}
    z_k^{(\mathrm{out})}
    &=
    \frac12
    \sum_j
    \left(
        z_{j,0}^{(in)}-z_{j,1}^{(in)}
    \right)
    \left[
        |\langle k|U_0|j\rangle|^2
        -
        |\langle k|U_1|j\rangle|^2
    \right]
    \nonumber\\
    &=
    \frac12
    \sum_j
    z_j^{(in)}
    \left[
        |\langle k|U_0|j\rangle|^2
        -
        |\langle k|U_1|j\rangle|^2
    \right].
    \label{eqs_output}
\end{align}

Thus, we obtain
\begin{equation}
    \mathbf{z}^{(\mathrm{out})}
    =
    \mathbf{\omega}
    \mathbf{z}^{(\mathrm{in})},
\end{equation}
with
\begin{equation}
    \mathbf{\omega}_{kj}
    =
    \frac12
    \left(
        |\langle k|U_0|j\rangle|^2-|\langle k|U_1|j\rangle|^2
    \right).
    \label{eqs_Weff}
\end{equation}

In this paper, we focus on the `effective' trajectories with $\widetilde\omega$, so that $k=1,\ldots,d_t,\qquad j=1,\ldots,d_{t-1}$, which correspond to the affine transformation of the original ReLU network.
The other degrees of freedom then provide escape channels for the probability weight that is not retained by the effective transformation.

\subsection{Cascading the quantum blocks}
Now consider a cascading of $T$ blocks.
For simplicity, we denote the number of shots after applying the $t$-th block, with readout $\sigma^{(t)}=0,1$, $s^{(t)}=k$ as $N^{(t)}_{0,k}$ and $N^{(t)}_{1,k}$, respectively, and denote the total number of shots as $N$, initially.
We denote the population imbalance ${\mathbf z}^{(t)}\in\mathbb{R}^{d_t}$, with the $k$-th component as
\begin{equation}
    z^{(t)}_k=\frac{N^{(t)}_{0,k}-N^{(t)}_{1,k}}{N}, \qquad k=1,\ldots,d_t
\end{equation}
$z^{(t)}_k$ describes the difference of shots with same target qubit readout $k$, but with different control qubit readouts.
The input of the $t$-th block is given as
\begin{equation}
    \rho^{(t-1)}=
    \sum_{k=1}^{d_{t-1}}
    \left(\frac{N^{(t)}_{0,k}}{N}|0\rangle\langle0|_C\otimes|k\rangle\langle k|_T+
    \frac{N^{(t)}_{1,k}}{N}|1\rangle\langle1|_C\otimes|k\rangle\langle k|\right)
    +\frac{N-\sum_{k=1}^{d_{t-1}}
    \left(N^{(t)}_{0,k}+N^{(t)}_{1,k}\right)}{N}\rho^{(t-1)}_{others}
\end{equation}
where $\rho^{(t-1)}_{others}$ represents the density matrix of the discard shots in the first $t-1$ blocks.
The discard shots are not sent to the $t$-th block, so that the operations only acts on the first term.
Specifically, we denote $\mathbf{x}=\widetilde{\mathbf h}^{(0)}/C^{(0)}$ as the rescaled inputs, where $C^{(0)}\geq \sum_{j=1}^{d_0}|h^{(0)}_j|$.
$\mathbf{x}$ is mapped as a mixed state
\begin{equation}
    \rho^{(0)}
    =\sum_{k=1}^{d_0}\left(
    |x_k|\cdot|\Theta(x_k)\rangle\langle\Theta(x_k)|_C\otimes|k\rangle\langle k|_T
    \right)
    +\left((1-\sum_{k=1}^{d_0}|x_k|\right)\rho^{(0)}_{others}
\end{equation}
where $\Theta$ is the threshold function, $\Theta(x_k)=1$ if $x_k\geq0$, otherwise $\Theta(x_k)=0$ if $x_k<0$.

The $t$-th block therefore realizes linear transformation
\begin{equation}
{\mathbf z}^{(t)}
=
\widetilde \omega^{(t)}
{\mathbf z}^{(t-1)}
\end{equation}
After the controlled-unitary operation, all qubits are measured in the computational basis. 
The resulting conditional state is still diagonal in the computational basis.
Record the readout of control qubit as $\sigma^{(t)}=0,1$, and target qubit as $s^{(t)}=1,\ldots, d_{t},d_{t}+1,\ldots$.
The shots with $s^{(t)}=1,\ldots, d_{t}$ are sent to the next block as input, as they correspond to the transformation of $\widetilde W^{(t)}$.
The other shots are discard, and will no more be sent to the successive blocks.
Thus the retained state therefore has the same diagonal structure as Eq.~\eqref{eqs_input} and can serve as the input to another block.
The cascade of $T$ blocks implements
\begin{equation}
    \begin{split}
    {\mathbf z}^{(t)}
    &=\widetilde \omega^{(T)}
    \widetilde \omega^{(T-1)}
    \cdots
    \widetilde \omega^{(1)}
    {\mathbf z}^{(0)}
    \\&=
    \left(
        \prod_{t=1}^{T}C^{(t)}
    \right)^{-1}
    \widetilde W^{(T)}
    \widetilde W^{(T-1)}
    \cdots
    \widetilde W^{(1)}
    {\mathbf z}^{(0)}.
    \end{split}
    \label{eqs_cascade}
\end{equation}

\section{Fermionic Transport Model Interpretation}

\subsection{Effective trajectories and probability}

For a shot that completes the $T$ cascaded blocks, record the measurement results of control qubit
\begin{equation}
    \boldsymbol{\sigma}
    =
    \left(
        \sigma^{(0)},
        \sigma^{(1)},
        \ldots,
        \sigma^{(T)}
    \right),
    \qquad
    \sigma^{(t)}\in\{0,1\},
\end{equation}
and record the results of the target qubits
\begin{equation}
    \mathbf{s}
    =
    \left(
        s^{(0)},
        s^{(1)},
        \ldots,
        s^{(T)}
    \right),
    \qquad
    s^{(t)}\in\{1,\ldots,d_t\}.
\end{equation}
$\boldsymbol{\sigma},\mathbf{s}$ together define a complete trajectory.
Such trajectories are `effective', as they correspond to the affine transformations of the original ReLU network.
Shots with $s^{(t)}>d_t$ are discard, and will not be sent to the successive blocks as inputs.

Denote the probability of an effective trajectory by
\begin{equation}
    p(\boldsymbol{\sigma},\mathbf{s})
    =
    \Pr
    \left(
        \sigma^{(0)}\to\cdots\to\sigma^{(T)},
        \;
        s^{(0)}\to\cdots\to s^{(T)}
    \right).
\end{equation}
The signed output associated with the final target state $k$ is then
\begin{equation}
    z_k^{(T)}
    =
    \sum_{\boldsymbol{\sigma},\mathbf{s}}
    (-1)^{
        \sum_{t=0}^{T}\sigma^{(t)}
    }
    \delta_{s^{(T)},k}
    p(\boldsymbol{\sigma},\mathbf{s}).
    \label{eqs_trajectory_output}
\end{equation}
The probability of the full trajectory can be split into each steps,
\begin{equation}
    \begin{split}
    p(\boldsymbol{\sigma},\mathbf{s})
    =
    &\Pr\left(
    \sigma^{(0)} 
    \to
    \sigma^{(1)}
    \to
    \cdots
    \to
    \sigma^{(T)},
    \to
    s^{(1)}
    \to
    \cdots
    \to
    s^{(T)}
    \right)
    \\
    =&
    \Pr\left(
    \sigma^{(0)}
    \to
    \sigma^{(1)},
    s^{(0)}
    \to
    s^{(1)}
    \right)\cdots
    \Pr\left(
    \sigma^{(T-1)}
    \to
    \sigma^{(T)},
    s^{(T-1)}
    \to
    s^{(T)}
    \right)
    \\=&
    \prod_{\tau=1}^T
    \frac12 |\langle s^{(\tau)}|U^{(t)}_{\sigma^{(\tau)}}|s^{(\tau-1)}\rangle|^2
    \end{split}
\end{equation}

These trajectories can be interpreted as a directed layered fermionic transport network. 
Consider a two-dimensional network with $T$ columns, where column $t$ contains $d_t$ sites. 
The site index is identified with the target state, $s^{(t)}\in\{1,\ldots,d_t\}$.
The columns are ordered from left to right corresponding to the successive quantum blocks.

Each measurement shot corresponds to a fermion initially occupying a certain site in the first column. 
Its site is determined by the target-register state
and its spin by the control state,
\begin{equation}
    \sigma^{(t)}=0,1
    \quad\longleftrightarrow\quad
    \uparrow,\downarrow.
\end{equation}
At each layer, the fermion propagates from column $t-1$ to column $t$
The hopping probability from site $s^{(t-1)}$ in the $t-1$-th column, to site  $s^{(t)}$ in the $t$-th column, with spin $\sigma^{(t)}$, is generated by the corresponding quantum block as
\begin{equation}
    \Pr\left(
    \sigma^{(t-1)}
    \to
    \sigma^{(t)},
    s^{(t)}
    \to
    s^{(t)}
    \right)=
    \frac12 |\langle s^{(t)}|U_{\sigma^{(t)}}|s^{(t-1)}\rangle|^2
\end{equation}
When a fermion hops to the neighbor column, it has same probability to keep or flip its spin, yielding the $\frac12$.
However, depending on whether its spin is flipped, the probability of hopping to various sites are often different.
Therefore, it is expected to observe imbalanced population at each site after the hopping.
Recalling that there are only $d_t$ sites in the $t$-th column corresponding to $s^{(t)}\in\{1,\ldots,d_t\}$.
Shots with $s^{(t)}>d_t$ are interpreted as fermions that escape the 2-D network.

\subsection{Partition function for effective trajectories}
Focusing on the effective trajectories with rescaled input $\mathbf{x}$, the probability to find a shot/fermion with trajectory $\boldsymbol{\sigma},\mathbf{s}$ starting from $s^{(0)}$, and post selection criteria (or open/closed status) $\boldsymbol{\alpha}$ is $x_{s^{(0)}}p(\boldsymbol{\sigma},\mathbf{s})q(\boldsymbol{\alpha})$, as these two events are independent.
Therefore, the total entropy is given as
\begin{equation}
    \begin{split}
        S_{tot}=&-\sum_{\boldsymbol{\sigma},\mathbf{s}}\sum_{\boldsymbol{\alpha}}x_{s^{(0)}}p(\boldsymbol{\sigma},\mathbf{s})q(\boldsymbol{\alpha})\ln\left(x_{s^{(0)}}p(\boldsymbol{\sigma},\mathbf{s})q(\boldsymbol{\alpha})\right)
        \\=&-\sum_{\boldsymbol{\sigma},\mathbf{s}}x_{s^{(0)}}p(\boldsymbol{\sigma},\mathbf{s})
        \sum_{\boldsymbol{\alpha}}q(\boldsymbol{\alpha})\ln q(\boldsymbol{\alpha})
        -\sum_{\boldsymbol{\alpha}}q(\boldsymbol{\alpha})\sum_{\boldsymbol{\sigma},\mathbf{s}}x_{s^{(0)}}p(\boldsymbol{\sigma},\mathbf{s})\ln\left(x_{s^{(0)}}p(\boldsymbol{\sigma},\mathbf{s})\right)
        \\=&-\sum_{\boldsymbol{\sigma},\mathbf{s}}x_{s^{(0)}}p(\boldsymbol{\sigma},\mathbf{s})
        \sum_{\boldsymbol{\alpha}}q(\boldsymbol{\alpha})\ln q(\boldsymbol{\alpha})-\sum_{\boldsymbol{\sigma},\mathbf{s}}x_{s^{(0)}}p(\boldsymbol{\sigma},\mathbf{s})\ln\left(x_{s^{(0)}}p(\boldsymbol{\sigma},\mathbf{s})\right)
    \end{split}
\end{equation}
for fixed quantum blocks and inputs, $\sum_{\boldsymbol{\sigma},\mathbf{s}}x_{s^{(0)}}p(\boldsymbol{\sigma},\mathbf{s})\ln\left(x_{s^{(0)}}p(\boldsymbol{\sigma},\mathbf{s})\right)$ are constants, yielding that the second term is fixed, with $\sum_{\boldsymbol{\sigma},\mathbf{s}}x_{s^{(0)}}p(\boldsymbol{\sigma},\mathbf{s})$ as a rescaling constant.
For simplicity we focus on the Shannon entropy of $\boldsymbol{\alpha}$
\begin{equation}
    S = -\sum_{\boldsymbol{\alpha}} q(\boldsymbol{\alpha}) \ln q(\boldsymbol{\alpha}).
\end{equation}
where the rescaling constant can be combined into $\beta$ in the final results.

Next we need to find a constraint.
Recalling that the population imbalance
\begin{equation}
    \mathbf{\widetilde z}^{(t)}(\boldsymbol{\alpha})
    = \hat{\alpha}^{(t)}\widetilde \omega^{(t)}
    \hat{\alpha}^{(t-1)}\widetilde \omega^{(t-1)}
    \cdots
    \hat{\alpha}^{(1)}\widetilde \omega^{(1)}
    {\mathbf z}^{(0)}
    \label{eqs_ztalpha}
\end{equation}
where $\hat{\alpha}^{(t)}=diag\{\alpha^{(t)}_1, \ldots, \alpha^{(t)}_{d_t}\}$.
Especially, the $k$-th component of $\mathbf{\widetilde z}^{(T)}(\boldsymbol{\alpha})$ is given by
\begin{equation}
    \widetilde z_k^{(T)}(\boldsymbol{\alpha})
    =
    \sum_{\boldsymbol{\sigma},\mathbf{s}}
    (-1)^{\sum_{t=0}^{T}\sigma^{(t)}}
    \delta_{s_T,k}\,
    p(\boldsymbol{\sigma},\mathbf{s})
    \prod_{t=1}^{T}
    \alpha^{(t)}_{s^{(t)}}.
\end{equation}
In the fermionic transport picture, $\widetilde{z}_{k}^{(t)}$ denotes the ensemble-averaged signed spin associated with the $k$-th site in the $t$-th column.
Each fermion contributes $+1/2$ if it passes through this site with spin up, $-1/2$ if it passes through it with spin down, and zero otherwise.
The average is taken over the entire ensemble, including fermions that fulfill an effective trajectory, and those that escape from the network.
Thus, the overall population imbalance $\mathbf{\widetilde z}^{(t)}$ contains trivial shots, with possible $s^{(\tau)}>d_\tau$ in the successive $T-t$ blocks $\tau=t+1,\ldots,T$.

However, we only need the effective shots, $\sigma^{(t)}\in\{0,1\},s^{(t)}\in\{1,\ldots,d_t\}$, which correspond to affine transformations of the original ReLU network, and fermions that always stay in the network.
The probability that a fermion keeps in the 2-D network (not escape) starting from the $k$-th site in the $t$-th column is (fermions that flees away the network are not effective, yet the ones blocked at certain sites due to $\alpha=0$ are still effective, and their contributions should be included)
\begin{equation}
    \sum_{s^{(t+1)}}\cdots\sum_{s^{(T)}}\Omega_{s^{(t)}=k\to s^{(t+1)}}^{(t+1)}\Omega_{s^{(t+1)}\to s^{(t+2)}}^{(t+2)}\cdots\Omega_{s^{(T-1)}\to s^{(T)}}^{(T)}
\end{equation}
where $\Omega^{(\tau)}=\widetilde\omega_+ + \widetilde\omega_-$, and $\Omega_{s^{(\tau-1)}\to s^{(\tau)}}^{(\tau)}=\Omega_{s^{(\tau)},s^{(\tau-1)}}^{(\tau)}$, we rewrite the subscripts to express the propagation direction of the network.
Therefore, we have the effective population imbalance ${\pi}^{(t)}_k$ for the $k$-th site in the $t$-th column,
\begin{equation}
    \begin{split}
        {\pi}^{(t)}_k&={\widetilde z}^{(t)}_k \sum_{s^{(t+1)}}\cdots\sum_{s^{(T)}}\Omega_{s^{(t)}=k\to s^{(t+1)}}^{(t+1)}\Omega_{s^{(t+1)}\to s^{(t+2)}}^{(t+2)}\cdots\Omega_{s^{(T-1)}\to s^{(T)}}^{(T)}
        \\&=
        \sum_{\boldsymbol{s},s^{(t)}=k}
        \left(
        \prod_{\tau=1}^t
        \tilde{\omega}_{s^{(\tau-1)}\to s^{(\tau)}}^{(\tau)}
        \right)
        \left(
        \prod_{\tau=t+1}^T
        {\Omega}_{s^{(\tau-1)}\to s^{(\tau)}}^{(\tau)}
        \right)
        x_{s^{(0)}}\prod_{\tau=1}^{t}
    \alpha^{(\tau)}_{s^{(\tau)}}
    \\&=
        \sum_{\boldsymbol{\mathbf{s}} ,s^{(t)}=k}\left\{
        \prod_{\tau=1}^t\frac12
        \left(
        |\langle s^{(\tau)}|U^{(t)}_{0}|s^{(\tau-1)}\rangle|^2
        -|\langle s^{(\tau)}|U^{(t)}_{1}|s^{(\tau-1)}\rangle|^2
        \right)
        \right.
        \\&
        \left.
        \cdot\prod_{\tau=t+1}^T\frac{1}{2}
        \left(
        |\langle s^{(\tau)}|U^{(t)}_{0}|s^{(\tau-1)}\rangle|^2
        +|\langle s^{(\tau)}|U^{(t)}_{1}|s^{(\tau-1)}\rangle|^2
        \right)
        x_{s^{(0)}}\prod_{\tau=1}^{t}
    \alpha^{(\tau)}_{s^{(\tau)}}\right\}
    \\&=
    \sum_{\boldsymbol{\sigma}, \mathbf{s},s^{(t)}=k} 
        \left(
        \prod_{\tau=1}^T\frac12 |\langle s^{(\tau)}|U^{(t)}_{\sigma^{(\tau)}}|s^{(\tau-1)}\rangle|^2
        \right)
        x_{s_0} \prod_{\tau=1}^t (-1)^{\sigma^{(\tau)}}
    \alpha^{(\tau)}_{s^{(\tau)}}
    \\&=
    \sum_{\boldsymbol{\sigma},\mathbf{s},s^{(t)}=k}
    p(\boldsymbol{\sigma},\mathbf{s})x_{s_0}
    \prod_{\tau=1}^{t}(-1)^{\sigma^{(\tau)}}
    \alpha^{(\tau)}_{s^{(\tau)}}
    \end{split}
\end{equation}
For simplicity, we define the effective population imbalance ${\pi}^{(t)}$ of all effective sites of the $t$-th layer,
\begin{equation}
    \pi^{(t)}(\boldsymbol{\alpha}) = \sum_{k=1}^{d_t}{\pi}^{(t)}_k=
    \sum_{\boldsymbol{\sigma},\mathbf{s}}
    p(\boldsymbol{\sigma},\mathbf{s})x_{s_0}
    \prod_{\tau=1}^{t}(-1)^{\sigma^{(\tau)}}
    \alpha^{(\tau)}_{s^{(\tau)}}
\end{equation}
where the sum runs over all effective trajectories that pass by $s^{(t)}=k$.
In fermionic transport model, $\pi^{(t)}$ describes the average spin of sites $s^{(t)}=1,\ldots,d_t$ in $t$-th layer over all effective trajectories.

\section{Supplementary materials of the proposed Theorem and the proof}
\subsection{Preliminaries}
Recalling that the effective population imbalance of all effective sites of the $t$-th layer, ${\pi}^{(t)}$, can be rewritten as
\begin{equation}
    \begin{split}
    \pi^{(t)}(\boldsymbol{\alpha}) &= 
    \sum_{\boldsymbol{\sigma},\mathbf{s}}
    p(\boldsymbol{\sigma},\mathbf{s})x_{s_0}
    \prod_{\tau=1}^{t}(-1)^{\sigma^{(\tau)}}
    \alpha^{(\tau)}_{s^{(\tau)}}
        \\
        =&
        \sum_{\boldsymbol{\sigma}, \mathbf{s}} 
        \left(
        \prod_{\tau=1}^T\frac12 |\langle s^{(\tau)}|U^{(t)}_{\sigma^{(\tau)}}|s^{(\tau-1)}\rangle|^2
        \right)
        x_{s_0} \prod_{\tau=1}^t (-1)^{\sigma^{(\tau)}}
    \alpha^{(\tau)}_{s^{(\tau)}}
        \\
        =&
        \sum_{\boldsymbol{\mathbf{s}} }\left\{
        \prod_{\tau=1}^t\frac12
        \left(
        |\langle s^{(\tau)}|U^{(t)}_{0}|s^{(\tau-1)}\rangle|^2
        -|\langle s^{(\tau)}|U^{(t)}_{1}|s^{(\tau-1)}\rangle|^2
        \right)
        \right.
        \\&
        \left.
        \cdot\prod_{\tau=t+1}^T\frac{1}{2}
        \left(
        |\langle s^{(\tau)}|U^{(t)}_{0}|s^{(\tau-1)}\rangle|^2
        +|\langle s^{(\tau)}|U^{(t)}_{1}|s^{(\tau-1)}\rangle|^2
        \right)
        x_{s^{(0)}}\prod_{\tau=1}^{t}
    \alpha^{(\tau)}_{s^{(\tau)}}\right\}
        \\
        =&
        \sum_{s^{(0)}}\sum_{s^{(1)}}\cdots \sum_{s^{(T)}}
        \left(
        \prod_{\tau=1}^t
        \tilde{\omega}_{s^{(\tau-1)}\to s^{(\tau)}}^{(\tau)}
        \right)
        \left(
        \prod_{\tau=t+1}^T
        {\Omega}_{s^{(\tau-1)}\to s^{(\tau)}}^{(\tau)}
        \right)
        x_{s^{(0)}}\prod_{\tau=1}^{t}
    \alpha^{(\tau)}_{s^{(\tau)}}
    \end{split}
    \label{eqs_repi}
\end{equation}
where $\Omega^{(\tau)}=\widetilde\omega_+ + \widetilde\omega_-$, $\widetilde \omega^{(t)}=\left(\widetilde \omega^{(t)}_{+} - \widetilde \omega^{(t)}_{-}\right)/2$, with $\left(\widetilde\omega^{(t)}_{\pm}\right)_{kj}=|\langle k|U_{0,1}|j\rangle|^2$. 
We rewrite the subscripts to express the propagation direction of the ReLU network/hopping direction of the fermions, yielding that $\Omega_{s^{(\tau-1)}\to s^{(\tau)}}^{(\tau)}=\Omega_{s^{(\tau)},s^{(\tau-1)}}^{(\tau)}$, $\widetilde\omega_{s^{(\tau-1)}\to s^{(\tau)}}^{(\tau)}=\widetilde\omega_{s^{(\tau)},s^{(\tau-1)}}^{(\tau)}$.

Applying Eq.(\ref{eqs_repi}) we can expand $q(\boldsymbol{\alpha})$ as
\begin{equation}
    \begin{split}
    q(\boldsymbol{\alpha})=&\frac{1}{Z}\exp \left(\beta\sum_{t=1}^T  2^{T-t}\pi^{(t)}(\boldsymbol{\alpha})\right)
    \\=&\frac{1}{Z}\exp \left( \sum_{t=1}^T \beta 2^{T-t} \sum_{\boldsymbol{\sigma}, \mathbf{s}} p(\boldsymbol{\sigma}, \mathbf{s}) x_{s_0} \prod_{\tau=1}^t (-1)^{\sigma^{(\tau)}} \alpha^{(\tau)}_{s^{(\tau)}} \right)
    \\
    =&
    \frac{1}{Z}\exp \left\{ \beta\sum_{t=1}^T  2^{T-t} 
    \sum_{s^{(0)}}\sum_{s^{(1)}}\cdots \sum_{s^{(T)}}
        \left(
        \prod_{\tau=1}^t
        \tilde{\omega}_{s^{(\tau-1)}\to s^{(\tau)}}^{(\tau)}
        \right)
        \left(
        \prod_{\tau=t+1}^T
        {\Omega}_{s^{(\tau-1)}\to s^{(\tau)}}^{(\tau)}
        \right)
        x_{s^{(0)}}
    \prod_{\tau=1}^t \alpha^{(\tau)}_{s^{(\tau)}} \right\}
    \\
    =&
    \frac{1}{Z}\exp \left\{ \beta
    \sum_{s^{(0)}}\sum_{s^{(1)}}
    2^{T-1}\alpha^{(1)}_{s^{(1)}}
        \tilde{\omega}_{s^{(0)}\to s^{(1)}}^{(1)}
        x_{s^{(0)}}
        \left(\sum_{s^{(2)}}\cdots \sum_{s^{(T)}}
        \prod_{\tau=2}^T
        {\Omega}_{s^{(\tau-1)}\to s^{(\tau)}}^{(\tau)}
        \right)
        \right.
        \\
        &+
        \beta
    \sum_{s^{(0)}}\sum_{s^{(1)}}\sum_{s^{(2)}}
    2^{T-2}\alpha^{(1)}_{s^{(1)}}\alpha^{(2)}_{s^{(2)}}
        \tilde{\omega}_{s^{(0)}\to s^{(1)}}^{(1)}
        \tilde{\omega}_{s^{(1)}\to s^{(2)}}^{(2)}
        x_{s^{(0)}}
        \left(\sum_{s^{(3)}}\cdots \sum_{s^{(T)}}
        \prod_{\tau=3}^T
        {\Omega}_{s^{(\tau-1)}\to s^{(\tau)}}^{(\tau)}
        \right)
        \\
        &+\cdots \cdots
        \\
        &+
        \beta
    \sum_{s^{(0)}}\cdots\sum_{s^{(T-1)}}
    2\alpha^{(1)}_{s^{(1)}}
        \tilde{\omega}_{s^{(0)}\to s^{(1)}}^{(1)}
\cdots\alpha^{(T-1)}_{s^{(T-1)}}
        \tilde{\omega}_{s^{(T-2)}\to s^{(T-1)}}^{(T-1)}
        x_{s^{(0)}}
        \left(\sum_{s^{(T)}}
        {\Omega}_{s^{(T-1)}\to s^{(T)}}^{(T)}
        \right)
        \\
        &+\left.
        \beta
    \sum_{s^{(0)}}\cdots\sum_{s^{(T)}}
    \alpha^{(1)}_{s^{(1)}}
        \tilde{\omega}_{s^{(0)}\to s^{(1)}}^{(1)}
\cdots\alpha^{(T)}_{s^{(T)}}
        \tilde{\omega}_{s^{(T-2)}\to s^{(T)}}^{(T)}
        x_{s^{(0)}}
    \right\}
    \\
    =&
    \frac{1}{Z}\exp \left\{ \beta
    \sum_{s^{(0)}}\sum_{s^{(1)}}
    \alpha^{(1)}_{s^{(1)}}
        \tilde{\omega}_{s^{(0)}\to s^{(1)}}^{(1)}
        x_{s^{(0)}}
        \left[
        2^{T-1}\left(\sum_{s^{(2)}}\cdots \sum_{s^{(T)}}
        \prod_{\tau=2}^T
        {\Omega}_{s^{(\tau-1)}\to s^{(\tau)}}^{(\tau)}
        \right)
        \right.
        \right.
        \\
        &+
    \sum_{s^{(2)}}\alpha^{(2)}_{s^{(2)}}
        \tilde{\omega}_{s^{(1)}\to s^{(2)}}^{(2)}
        \left(
        2^{T-2}\left(\sum_{s^{(3)}}\cdots \sum_{s^{(T)}}
        \prod_{\tau=3}^T
        {\Omega}_{s^{(\tau-1)}\to s^{(\tau)}}^{(\tau)}
        \right)
        +\cdots
        \right.
        \\
        &+
        \left.\left.\left.
        \sum_{s^{(T-1)}}
    \alpha^{(T-1)}_{s^{(T-1)}}
        \tilde{\omega}_{s^{(T-2)}\to s^{(T-1)}}^{(T-1)}
        \sum_{s^{(T)}}\left(
        2{\Omega}_{s^{(T-1)}\to s^{(T)}}^{(T)}
        +
        \alpha^{(T)}_{s^{(T)}}
        \tilde{\omega}_{s^{(T-1)}\to s^{(T)}}^{(T)}
        \right)
        \right)\right]\right\}
    \end{split}
\end{equation}
 
For simplicity, here we denote $A^{(T)}_{s^{T}}=1$,
and iteratively, for $t=2,3,\ldots,T$, we have
\begin{equation}
    \begin{split}
        A^{(t-1)}_{s^{(t-1)}}=
        &\sum_{s^{(t)}}\alpha^{(t-1)}_{s^{(t-1)}}
\tilde{\omega}_{s^{(t-2)}\to s^{(t-1)}}^{(t-1)}A^{(t)}_{s^{(t)}}
+
2^{T-t+1}\sum_{s^{(t)}}\cdots\sum_{s^{(T)}}\prod_{\tau=t}^T{\Omega}_{s^{(\tau-1)}\to s^{(\tau)}}^{(\tau)}
    \end{split}
\label{eqs_At_itera}
\end{equation}

\begin{lemma}
For $t=1,2,\ldots,T-1$, $s^{(t)}=1,2\ldots,d_t$, we have 
\begin{equation}
    \left(2^{T-t+1}-1\right)\sum_{s^{(t+1)}}\cdots\sum_{s^{(T)}}\prod_{\tau=t+1}^T{\Omega}_{s^{(\tau-1)}\to s^{(\tau)}}^{(\tau)}
    \geq A^{(t)}_{s^{(t)}}\geq
    \sum_{s^{(t+1)}}\cdots\sum_{s^{(T)}}\prod_{\tau=t+1}^T{\Omega}_{s^{(\tau-1)}\to s^{(\tau)}}^{(\tau)}
\end{equation}
\label{lm_A}
\end{lemma}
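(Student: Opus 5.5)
The plan is a backward induction on $t$, from $t=T$ down to $t=1$, with both inequalities carried along together.

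Write
\[
R^{(t)}_{s^{(t)}}=\sum_{s^{(t+1)}}\cdots\sum_{s^{(T)}}\prod_{\tau=t+1}^T\Omega^{(\tau)}_{s^{(\tau-1)}\to s^{(\tau)}}
\]
for the survival weight of a fermion starting at site $s^{(t)}$, with the convention $R^{(T)}_{s^{(T)}}=1$.

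\textbf{Preliminary facts.} Three elementary observations will be used.
\begin{itemize}
\item[(i)] The survival weights satisfy $R^{(t-1)}_{s^{(t-1)}}=\sum_{s^{(t)}}\Omega^{(t)}_{s^{(t-1)}\to s^{(t)}}R^{(t)}_{s^{(t)}}$, by splitting off the first factor of the product.
\item[(ii)] Since $\widetilde\omega^{(t)}_{\pm}$ have nonnegative entries (squared moduli of matrix elements), and $\widetilde\omega^{(t)}=(\widetilde\omega^{(t)}_+-\widetilde\omega^{(t)}_-)/2$ while $\Omega^{(t)}=\widetilde\omega^{(t)}_++\widetilde\omega^{(t)}_-$, we have entrywise
\[
\bigl|\widetilde\omega^{(t)}_{s\to s'}\bigr|\le \tfrac12\,\Omega^{(t)}_{s\to s'}.
\]
\item[(iii)] $\alpha^{(t)}_{s}\in\{0,1\}$.
\end{itemize}
I will read the recursion Eq.~(\ref{eqs_At_itera}) in its intended form, namely the form that produces the nested bracket in the expansion of $q(\boldsymbol\alpha)$:
\[
A^{(t-1)}_{s^{(t-1)}}=\sum_{s^{(t)}}\alpha^{(t)}_{s^{(t)}}\widetilde\omega^{(t)}_{s^{(t-1)}\to s^{(t)}}A^{(t)}_{s^{(t)}}+2^{T-t+1}R^{(t-1)}_{s^{(t-1)}}.
\]

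\textbf{Induction claim.} For every $t\le T$,
\[
\bigl(2^{T-t+1}-1\bigr)R^{(t)}\ \ge\ A^{(t)}\ \ge\ R^{(t)}\ \ge 0 .
\]
The base case $t=T$ is trivial: $A^{(T)}=R^{(T)}=1$ and $2^1-1=1$. Extending the claim to $t=T$ makes the induction start cleanly, even though the lemma only asserts it for $t\le T-1$.

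\textbf{Inductive step.} Set $k=T-t+1$ and assume the claim at level $t$. Since $A^{(t)}\ge 0$, facts (ii) and (iii) bound the first term of the recursion in absolute value by
\[
\tfrac12\sum_{s^{(t)}}\Omega^{(t)}_{s^{(t-1)}\to s^{(t)}}A^{(t)}_{s^{(t)}}\ \le\ \tfrac12(2^k-1)\sum_{s^{(t)}}\Omega^{(t)}_{s^{(t-1)}\to s^{(t)}}R^{(t)}_{s^{(t)}}=\tfrac12(2^k-1)R^{(t-1)}_{s^{(t-1)}},
\]
using fact (i) for the last equality.
\begin{itemize}
\item Upper bound: $A^{(t-1)}\le\bigl(2^k+\tfrac12(2^k-1)\bigr)R^{(t-1)}=\bigl(\tfrac32 2^k-\tfrac12\bigr)R^{(t-1)}$. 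This is at most $(2^{k+1}-1)R^{(t-1)}$ because $\tfrac12 2^k\ge\tfrac12$.
\item Lower bound: $A^{(t-1)}\ge\bigl(2^k-\tfrac12(2^k-1)\bigr)R^{(t-1)}=\tfrac12(2^k+1)R^{(t-1)}\ge R^{(t-1)}\ge0$.
\end{itemize}
This closes the induction, since $k+1=T-(t-1)+1$.

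\textbf{Main obstacle.} The only delicate point is bookkeeping. The recursion as printed has a mismatch: $\alpha^{(t-1)}$ and $\widetilde\omega^{(t-1)}$ appear where $\alpha^{(t)}$ and $\widetilde\omega^{(t)}_{s^{(t-1)}\to s^{(t)}}$ are needed. There is also a factor-of-$2$ ambiguity in the definition of $\widetilde\omega$ relative to $\Omega$.

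I would first verify, from the expansion of $q(\boldsymbol\alpha)$, that the corrected recursion reproduces the exponent. I would then check that the bound $|\widetilde\omega|\le\Omega/2$ holds under the normalization actually used there. If instead only $|\widetilde\omega|\le\Omega$ held, the same induction would fail, since the lower bound would become $A^{(t-1)}\ge R^{(t-1)}$ only marginally and the upper constant would grow to $2^{k+1}-1$ exactly. So I would pin this constant down before anything else.
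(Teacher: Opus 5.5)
Your backward induction is the paper's own argument. Both bounds are carried together. The first term of the recursion is dominated entrywise by $\Omega$ times the inductive upper bound on $A^{(t)}$, and $\sum_{s^{(t)}}\Omega^{(t)}_{s^{(t-1)}\to s^{(t)}}R^{(t)}_{s^{(t)}}=R^{(t-1)}_{s^{(t-1)}}$ closes the step. Your corrected reading of the recursion is also the one the paper actually uses: its base case $A^{(T-1)}=\sum_{s^{(T)}}(2\Omega^{(T)}+\alpha^{(T)}_{s^{(T)}}\tilde\omega^{(T)})$ is your first step up from the trivial level $t=T$. The only difference is the constant in the domination. The paper uses $|\tilde\omega|\le\Omega$; you use $|\tilde\omega|\le\tfrac12\Omega$. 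Yours is valid if $\Omega=\widetilde\omega_++\widetilde\omega_-$ and $\tilde\omega=\tfrac12(\widetilde\omega_+-\widetilde\omega_-)$ are taken literally as printed, and then it gives the sharper window $\tfrac12(2^k+1)R^{(t-1)}\le A^{(t-1)}\le(\tfrac32\,2^k-\tfrac12)R^{(t-1)}$.

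Your ``main obstacle'' paragraph gets the robustness question backwards, and this needs fixing. With only $|\tilde\omega|\le\Omega$, the same step gives $A^{(t-1)}\ge(2^k-(2^k-1))R^{(t-1)}=R^{(t-1)}$ and $A^{(t-1)}\le(2^k+(2^k-1))R^{(t-1)}=(2^{k+1}-1)R^{(t-1)}$. These are exactly the constants the lemma asserts at level $t-1$, so the induction closes. ``Only marginally'' and ``exactly'' mean success, not failure, and this is precisely the paper's computation.

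This matters because your fact (ii) fails in the normalization that actually generates the recursion you chose. In the expansion of $\pi^{(t)}$ from $p(\boldsymbol{\sigma},\mathbf{s})$, each untracked step contributes $\tfrac12(|\langle s^{(\tau)}|U_0|s^{(\tau-1)}\rangle|^2+|\langle s^{(\tau)}|U_1|s^{(\tau-1)}\rangle|^2)$. So the $\Omega$ inside $R^{(t)}$ carries the same factor $\tfrac12$ as $\tilde\omega$; the paper's proof of the lemma likewise writes $\Omega$ and $\tilde\omega$ as the sum and difference of the same nonnegative numbers. In that normalization $|\tilde\omega|=\Omega$ wherever one of $\widetilde\omega_\pm$ vanishes, so (ii) is false and your sharper window is false too. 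For example, at $T-t=1$, if $\widetilde\omega^{(T)}_+$ vanishes on the relevant entries and all $\alpha^{(T)}=1$, then $A^{(T-1)}=2R^{(T-1)}-R^{(T-1)}=R^{(T-1)}$. Symmetrically, if $\widetilde\omega^{(T)}_-$ vanishes you get $3R^{(T-1)}$, so the lemma's constants are attained there.

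Replace (ii) by $|\tilde\omega|\le\Omega$, which holds in either normalization. Your argument is then complete, with no need to resolve the factor of $2$ first.
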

\begin{proof}
As $A^{(T)}_{s^{T}}=1$, by Eq.(\ref{eq_At_itera}), we have
\begin{equation}
    A^{(T-1)}_{s^{T-1}} = \sum_{s^{(T)}}\left(
        2{\Omega}_{s^{(T-1)}\to s^{(T)}}^{(T)}
        +
        \alpha^{(T)}_{s^{(T)}}
        \tilde{\omega}_{s^{(T-1)}\to s^{(T)}}^{(T)}
        \right)
\end{equation}
Recalling that ${\Omega}_{s^{(\tau-1)}\to s^{(\tau)}}^{(\tau)}=\tilde{\omega}_{\sigma^{(\tau)}=0; s^{(\tau-1)}\to s^{(\tau)}}^{(\tau)}
        +\tilde{\omega}_{\sigma^{(\tau)}=1; s^{(\tau-1)}\to s^{(\tau)}}^{(\tau)}$, and
$\tilde{\omega}_{s^{(\tau-1)}\to s^{(\tau)}}^{(\tau)}=\tilde{\omega}_{\sigma^{(\tau)}=0; s^{(\tau-1)}\to s^{(\tau)}}^{(\tau)}
        -\tilde{\omega}_{\sigma^{(\tau)}=1; s^{(\tau-1)}\to s^{(\tau)}}^{(\tau)}$
As $\tilde{\omega}\geq0$, it is guaranteed that ${\Omega}_{s^{(\tau-1)}\to s^{(\tau)}}^{(\tau)}\geq |\tilde{\omega}_{s^{(\tau-1)}\to s^{(\tau)}}^{(\tau)}| \geq0$.
As $\alpha\in\{0,1\}$, we have
\begin{equation}
    3{\Omega}_{s^{(T-1)}\to s^{(T)}}^{(T)}\geq2{\Omega}_{s^{(T-1)}\to s^{(T)}}^{(T)}
        +
        \alpha^{(T)}_{s^{(T)}}
        \tilde{\omega}_{s^{(T-1)}\to s^{(T)}}^{(T)}\geq{\Omega}_{s^{(T-1)}\to s^{(T)}}^{(T)}
\end{equation}
Therefore, the bound holds for $t=T-1$.
\begin{equation}
    \left(2^2-1\right)\sum_{s^{(T)}}{\Omega}_{s^{(\tau-1)}\to s^{(\tau)}}^{(\tau)}
    \geq A^{(T-1)}_{s^{T-1}}\geq
    \sum_{s^{(T)}}{\Omega}_{s^{(\tau-1)}\to s^{(\tau)}}^{(\tau)}
\end{equation}

Assume the bound holds for $t$. 
\begin{equation}
    \left(2^{T-t+1}-1\right)\sum_{s^{(t+1)}}\cdots\sum_{s^{(T)}}\prod_{\tau=t+1}^T{\Omega}_{s^{(\tau-1)}\to s^{(\tau)}}^{(\tau)}
    \geq A^{(t)}_{s^{(t)}}\geq
    \sum_{s^{(t+1)}}\cdots\sum_{s^{(T)}}\prod_{\tau=t+1}^T{\Omega}_{s^{(\tau-1)}\to s^{(\tau)}}^{(\tau)}
\end{equation}
For $A^{(t-1)}$, we substitute the inductive hypothesis into Eq.(\ref{eq_At_itera}).
As ${\Omega}_{s^{(\tau-1)}\to s^{(\tau)}}^{(\tau)}\geq |\tilde{\omega}_{s^{(\tau-1)}\to s^{(\tau)}}^{(\tau)}| \geq0$, by Eq.(\ref{eq_At_itera}), we have
\begin{equation}
    \begin{split}
        A^{(t-1)}_{s^{(t-1)}}
        &\geq 2^{T-t+1}\sum_{s^{(t)}}\cdots\sum_{s^{(T)}}\prod_{\tau=t}^T{\Omega}_{s^{(\tau-1)}\to s^{(\tau)}}^{(\tau)}
    -\left(2^{T-t+1} - 1\right)\sum_{s^{(t)}}\cdots\sum_{s^{(T)}}\prod_{\tau=t}^T{\Omega}_{s^{(\tau-1)}\to s^{(\tau)}}^{(\tau)}
    \\&=\sum_{s^{(t)}}\cdots\sum_{s^{(T)}}\prod_{\tau=t}^T{\Omega}_{s^{(\tau-1)}\to s^{(\tau)}}^{(\tau)}
    \end{split}
\end{equation}
and
\begin{equation}
    \begin{split}
        A^{(t-1)}_{s^{(t-1)}}
        &\leq 2^{T-t+1}\sum_{s^{(t)}}\cdots\sum_{s^{(T)}}\prod_{\tau=t}^T{\Omega}_{s^{(\tau-1)}\to s^{(\tau)}}^{(\tau)}
    +\left(2^{T-t+1} - 1\right)\sum_{s^{(t)}}\cdots\sum_{s^{(T)}}\prod_{\tau=t}^T{\Omega}_{s^{(\tau)}\to s^{(\tau)}}^{(\tau)}
    \\&=
    \left(2^{T-t+2} - 1\right)\sum_{s^{(t)}}\cdots\sum_{s^{(T)}}\prod_{\tau=t}^T{\Omega}_{s^{(\tau-1)}\to s^{(\tau)}}^{(\tau)}
    \end{split}
\end{equation}
Thus, if the bound holds for $t$, it also holds for $t-1$.
\begin{equation}
    \left(2^{T-t+2} - 1\right)\sum_{s^{(t)}}\cdots\sum_{s^{(T)}}\prod_{\tau=t}^T{\Omega}_{s^{(\tau-1)}\to s^{(\tau)}}^{(\tau)}
    \geq A^{(t-1)}_{s^{(t-1)}}\geq
    \sum_{s^{(t)}}\cdots\sum_{s^{(T)}}\prod_{\tau=t}^T{\Omega}_{s^{(\tau-1)}\to s^{(\tau)}}^{(\tau)}
\end{equation}
This completes the induction.
\end{proof}

\subsection{Explicit details of the proof}
Extract $\alpha^{(1)}_{s^{(1)}}$ from each term in the summation over all effective trajectories, we can simplify $q(\boldsymbol{\alpha})$ with $A^{(1)}$,
\begin{equation}
    \begin{split}
    q(\boldsymbol{\alpha})=&\frac{1}{Z}\exp \left( \sum_{t=1}^T \beta 2^{T-t} \sum_{\boldsymbol{\sigma}, \mathbf{s}} p(\boldsymbol{\sigma}, \mathbf{s}) x_{s_0} \prod_{\tau=1}^t (-1)^{\sigma^{(\tau)}} \alpha^{(\tau)}_{s^{(\tau)}} \right)
    \\
    =&\frac{1}{Z}
    \exp \left\{ \beta
    \sum_{s^{(0)}}\sum_{s^{(1)}}
    \alpha^{(1)}_{s^{(1)}}
        \tilde{\omega}_{s^{(0)}\to s^{(1)}}^{(1)}
        x_{s^{(0)}}
        \left[
        2^{T-1}\left(\sum_{s^{(2)}}\cdots \sum_{s^{(T)}}
        \prod_{\tau=2}^T
        {\Omega}_{s^{(\tau-1)}\to s^{(\tau)}}^{(\tau)}
        \right)
        \right.
        \right.
        \\
        &+
    \sum_{s^{(2)}}\alpha^{(2)}_{s^{(2)}}
        \tilde{\omega}_{s^{(1)}\to s^{(2)}}^{(2)}
        \left(
        2^{T-2}\left(\sum_{s^{(3)}}\cdots \sum_{s^{(T)}}
        \prod_{\tau=3}^T
        {\Omega}_{s^{(\tau-1)}\to s^{(\tau)}}^{(\tau)}
        \right)
        +\cdots
        \right.
        \\
        &+
        \left.\left.\left.
        \sum_{s^{(T-1)}}
    \alpha^{(T-1)}_{s^{(T-1)}}
        \tilde{\omega}_{s^{(T-2)}\to s^{(T-1)}}^{(T-1)}
        \sum_{s^{(T)}}\left(
        2{\Omega}_{s^{(T-1)}\to s^{(T)}}^{(T)}
        +
        \alpha^{(T)}_{s^{(T)}}
        \tilde{\omega}_{s^{(T-1)}\to s^{(T)}}^{(T)}
        \right)
        \right)\right]\right\}
    \\
    =&\frac{1}{Z}
    \exp\left(\beta
    \sum_{s^{(0)}}\sum_{s^{(1)}}
    \alpha^{(1)}_{s^{(1)}}
        \tilde{\omega}_{s^{(0)}\to s^{(1)}}^{(1)}
        x_{s^{(0)}}A^{(1)}_{s^{(1)}}
    \right)
    \\
    =&
    \prod_{s^{(1)}}\exp\left(\beta
    A^{(1)}_{s^{(1)}}\alpha^{(1)}_{s^{(1)}}
    \sum_{s^{(0)}}
        \tilde{\omega}_{s^{(0)}\to s^{(1)}}^{(1)}
        x_{s^{(0)}}
    \right)
    \end{split}
\end{equation}
Denote $\widetilde{\boldsymbol{\alpha}}=\arg\max\limits_{\boldsymbol{\alpha}}\sum_{t=1}^T  2^{T-t}\pi^{(t)}(\boldsymbol{\alpha})$.
At $\beta\to+\infty$ limit, only the post-selection criteria $\widetilde{\boldsymbol{\alpha}}$ survives, yielding that $\lim_{\beta\to+\infty}q(\widetilde{\boldsymbol{\alpha}})=1$, whereas all other ${\boldsymbol{\alpha}}$ yield that $\lim_{\beta\to+\infty}q({\boldsymbol{\alpha}}) = 0, \widetilde{\boldsymbol{\alpha}}\neq \boldsymbol{\alpha}$.

According to Lemma\ref{lm_A}, $A^{(1)}_{s^{(1)}}\geq0$, and $A^{(1)}_{s^{(1)}}=0$ iff all $\prod_{\tau=2}^T{\Omega}_{s^{(\tau-1)}\to s^{(\tau)}}^{(\tau)}=0$.
Thus, $\widetilde{\boldsymbol{\alpha}}$ must maximize each terms in the product.
Recalling that $\alpha^{(1)}_{s^{(1)}}\in\{0,1\}$, we have
\begin{equation}
    \widetilde\alpha^{(1)}_{s^{(1)}}=
    \begin{cases} 
  1, & \sum_{s^{(0)}}
        \tilde{\omega}_{s^{(0)}\to s^{(1)}}^{(1)}
        x_{s^{(0)}}\geq0 \\ 
  0,      & \sum_{s^{(0)}}
        \tilde{\omega}_{s^{(0)}\to s^{(1)}}^{(1)}
        x_{s^{(0)}}<0
\end{cases}
\label{eqs_talpha1}
\end{equation}
which further yielding that
\begin{equation}
   \begin{split}
       &\widetilde\alpha^{(1)}_{s^{(1)}}\sum_{s^{(0)}} \tilde{\omega}_{s^{(0)}\to s^{(1)}}^{(1)}x_{s^{(0)}}
   \\=&\frac{1}{C^{(0)}C^{(1)}}\text{ReLU}
   \left(\sum_{s^{(0)}}\widetilde W^{(1)}_{s^{(0)}\to s^{(1)}}\hat{x}_{s^{(0)}}\right)
   \\=&
   \frac{1}{C^{(0)}C^{(1)}}\widetilde h^{(1)}_{s^{(1)}}
   \end{split}
\end{equation}

Next, we investigate the configurations $\alpha^{(2)}_{s^{(2)}}$ that maximize the exponential terms.
By extracting the $\alpha^{(2)}_{s^{(2)}}$ from each term of $A^{(1)}_{s^{(1)}}$, we can rewrite $q(\boldsymbol{\alpha})$ as
\begin{equation}
    \begin{split}
    q(\boldsymbol{\alpha})=&\frac{1}{Z}\exp \left( \sum_{t=1}^T \beta 2^{T-t} \sum_{\boldsymbol{\sigma}, \mathbf{s}} p(\boldsymbol{\sigma}, \mathbf{s}) x_{s_0} \prod_{\tau=1}^t (-1)^{\sigma^{(\tau)}} \alpha^{(\tau)}_{s^{(\tau)}} \right)
    \\
     =&
    \frac{1}{Z}\exp \left\{ \beta
    \sum_{s^{(0)}}\sum_{s^{(1)}}
    \alpha^{(1)}_{s^{(1)}}
        \tilde{\omega}_{s^{(0)}\to s^{(1)}}^{(1)}
        x_{s^{(0)}}
        \left[
        2^{T-1}\left(\sum_{s^{(2)}}\cdots \sum_{s^{(T)}}
        \prod_{\tau=2}^T
        {\Omega}_{s^{(\tau-1)}\to s^{(\tau)}}^{(\tau)}
        \right)
        \right.
        \right.
        \\
        &+
    \sum_{s^{(2)}}\alpha^{(2)}_{s^{(2)}}
        \tilde{\omega}_{s^{(1)}\to s^{(2)}}^{(2)}
        \left(
        2^{T-2}\left(\sum_{s^{(3)}}\cdots \sum_{s^{(T)}}
        \prod_{\tau=3}^T
        {\Omega}_{s^{(\tau-1)}\to s^{(\tau)}}^{(\tau)}
        \right)
        +\cdots
        \right.
        \\
        &+
        \left.\left.\left.
        \sum_{s^{(T-1)}}
    \alpha^{(T-1)}_{s^{(T-1)}}
        \tilde{\omega}_{s^{(T-2)}\to s^{(T-1)}}^{(T-1)}
        \sum_{s^{(T)}}\left(
        2{\Omega}_{s^{(T-1)}\to s^{(T)}}^{(T)}
        +
        \alpha^{(T)}_{s^{(T)}}
        \tilde{\omega}_{s^{(T-1)}\to s^{(T)}}^{(T)}
        \right)
        \right)\right]\right\}
    \\=&
    \frac{1}{Z}\exp \left\{ \beta
    \sum_{s^{(0)}}\sum_{s^{(1)}}
    \alpha^{(1)}_{s^{(1)}}
        \tilde{\omega}_{s^{(0)}\to s^{(1)}}^{(1)}
        x_{s^{(0)}}
        \left[
        2^{T-1}\left(\sum_{s^{(2)}}\cdots \sum_{s^{(T)}}
        \prod_{\tau=2}^T
        {\Omega}_{s^{(\tau-1)}\to s^{(\tau)}}^{(\tau)}
        \right)
        \right.
        \right.
        \\
        &+\left.\left.
    \sum_{s^{(2)}}\alpha^{(2)}_{s^{(2)}}
        \tilde{\omega}_{s^{(1)}\to s^{(2)}}^{(2)}
        A^{(2)}_{s^{(2)}}
        \right]\right\}
        \\=&
   \frac{1}{Z} \exp \left\{ \beta
    \sum_{s^{(0)}}\sum_{s^{(1)}}
    \alpha^{(1)}_{s^{(1)}}
        \tilde{\omega}_{s^{(0)}\to s^{(1)}}^{(1)}
        x_{s^{(0)}}
        \left[
        2^{T-1}\left(\sum_{s^{(2)}}\cdots \sum_{s^{(T)}}
        \prod_{\tau=2}^T
        {\Omega}_{s^{(\tau-1)}\to s^{(\tau)}}^{(\tau)}
        \right)
        \right]
        \right\}
        \\
        &\cdot\exp\left\{\left(
        \beta
    \sum_{s^{(0)}}\sum_{s^{(1)}}\sum_{s^{(2)}}
    \alpha^{(1)}_{s^{(1)}}\alpha^{(2)}_{s^{(2)}}
        \tilde{\omega}_{s^{(0)}\to s^{(1)}}^{(1)}
        x_{s^{(0)}}
        \tilde{\omega}_{s^{(1)}\to s^{(2)}}^{(2)}
        A^{(2)}_{s^{(2)}}
        \right)\right\}
        \\
    \end{split}
    \label{eq_expA2}
\end{equation}
The first term in Eq.(\ref{eq_expA2}) is maximized by appropriate setting configurations $\widetilde\alpha^{(1)}_{s^{(1)}}$, as given in Eq.(\ref{eqs_talpha1}).
The second term can be further rewritten as
\begin{equation}
    \begin{split}
        &\exp\left\{\left(
        \beta
    \sum_{s^{(0)}}\sum_{s^{(1)}}\sum_{s^{(2)}}
    \alpha^{(1)}_{s^{(1)}}\alpha^{(2)}_{s^{(2)}}
        \tilde{\omega}_{s^{(0)}\to s^{(1)}}^{(1)}
        x_{s^{(0)}}
        \tilde{\omega}_{s^{(1)}\to s^{(2)}}^{(2)}
        A^{(2)}_{s^{(2)}}
        \right)\right\}
        \\=&
        \exp\left\{\left(
        \beta
    \sum_{s^{(2)}}A^{(2)}_{s^{(2)}}\alpha^{(2)}_{s^{(2)}}
    \sum_{s^{(1)}}\tilde{\omega}_{s^{(1)}\to s^{(2)}}^{(2)}
    \sum_{s^{(0)}}
    \alpha^{(1)}_{s^{(1)}}
        \tilde{\omega}_{s^{(0)}\to s^{(1)}}^{(1)}
        x_{s^{(0)}}
        \right)\right\}
        \\=&
        \exp\left\{\left(
        \beta
    \sum_{s^{(2)}}A^{(2)}_{s^{(2)}}\alpha^{(2)}_{s^{(2)}}
    \sum_{s^{(1)}}\tilde{\omega}_{s^{(1)}\to s^{(2)}}^{(2)}
    \frac{1}{C^{(1)}}h_{s^{(1)}}^{(1)}
        \right)\right\}
        \\=&
        \prod_{s^{(2)}}\exp\left\{\left(
       \frac{\beta}{C^{(1)}} 
    A^{(2)}_{s^{(2)}}\alpha^{(2)}_{s^{(2)}}
    \sum_{s^{(1)}}\tilde{\omega}_{s^{(1)}\to s^{(2)}}^{(2)}
    h_{s^{(1)}}^{(1)}
        \right)\right\}
    \end{split}
\end{equation}
Therefore, we obtain $\widetilde\alpha^{(2)}_{s^{(2)}}$ as
\begin{equation}
    \widetilde\alpha^{(2)}_{s^{(2)}}=
    \begin{cases} 
  1, & \sum_{s^{(1)}}\tilde{\omega}_{s^{(1)}\to s^{(2)}}^{(2)}
    \widetilde h_{s^{(1)}}^{(1)}\geq0 \\ 
  0,      & \sum_{s^{(1)}}\tilde{\omega}_{s^{(1)}\to s^{(2)}}^{(2)}
    \widetilde h_{s^{(1)}}^{(1)}<0
\end{cases}
\label{eq_alpha2}
\end{equation}
yielding that
\begin{equation}
   \begin{split}
&\widetilde\alpha^{(2)}_{s^{(2)}}\sum_{s^{(1)}}\tilde{\omega}_{s^{(1)}\to s^{(2)}}^{(2)}
    \widetilde\alpha^{(1)}_{s^{(1)}}\sum_{s^{(0)}} \tilde{\omega}_{s^{(0)}\to s^{(1)}}^{(1)}x_{s^{(0)}}
   \\=&\frac{1}{C^{(0)}C^{(1)}C^{(2)}}\text{ReLU}
   \left(\sum_{s^{(1)}}\widetilde W^{(2)}_{s^{(1)}\to s^{(2)}}\widetilde h_{s^{(1)}}^{(1)}\right)
   \\=&
   \frac{1}{C^{(0)}C^{(1)}C^{(2)}}\widetilde h^{(2)}_{s^{(2)}}
   \end{split}
\end{equation}
Repeating this process iteratively, we prove that $\widetilde{\boldsymbol{\alpha}}$ is exactly same as the activated ($\widetilde\alpha^{(t)}_{s^{(t)}}=1$)/inactivated ($\widetilde\alpha^{(t)}_{s^{(t)}}=0$) state of the corresponding deep ReLU network. Recalling Eq.(\ref{eq_ztalpha}), we have
\begin{equation}
    \mathbf{\widetilde z}^{(t)}(\widetilde{\boldsymbol{\alpha}})
    =\left(\prod_{\tau=0}^tC^{(\tau)}\right)^{-1}\widetilde{\mathbf h}^{(t)}
\end{equation}

Recalling that at $\beta\to+\infty$ limit, only the post-selection criteria $\widetilde{\boldsymbol{\alpha}}$ survives, yielding that $\lim_{\beta\to+\infty}q(\widetilde{\boldsymbol{\alpha}})=1$, whereas all other ${\boldsymbol{\alpha}}$ yield that $\lim_{\beta\to+\infty}q({\boldsymbol{\alpha}}) = 0, \widetilde{\boldsymbol{\alpha}}\neq \boldsymbol{\alpha}$.
Therefore $\lim_{\beta\to+\infty} \langle\mathbf{\widetilde z}^{(t)}\rangle =\mathbf{\widetilde z}^{(t)}(\widetilde{\boldsymbol{\alpha}})$, Eq.(\ref{eq_exacth}) is proved.
$g_k$ is the $k$-th component of ${\widetilde z}^{(T)}$, we have $\left(\prod_{\tau=0}^tC^{(\tau)}\right)g_k(\widetilde{\boldsymbol{\alpha}})=f_k$. 
As $\lim_{\beta\to+\infty} \langle g_k\rangle =g_k(\widetilde{\boldsymbol{\alpha}})$, and $\lim_{\beta\to+\infty} \langle g_k\rangle^2= \lim_{\beta\to+\infty} \langle (g_k)^2\rangle =(g_k(\widetilde{\boldsymbol{\alpha}}))^2$,  Eq.(\ref{eq_exact_loss}) is proved.

\end{document}